\documentclass{article}
\usepackage{fullpage}
\usepackage{authblk}
\usepackage{algorithm,algorithmicx,algpseudocode}
\usepackage{amsthm,amsfonts,amsmath}
\usepackage{xspace}
\usepackage[utf8]{inputenc}
\usepackage{verbatim}
\usepackage{enumerate}
\usepackage[capitalize]{cleveref}
\newcommand{\dotdot}{\ldots}

\theoremstyle{theorem}
\newtheorem{theorem}{Theorem}[section]
\newtheorem{proposition}[theorem]{Proposition}

\newtheorem{lemma}[theorem]{Lemma}
\newtheorem{problem}{Problem}
\newtheorem{observation}[theorem]{Observation}

\theoremstyle{remark}
\newtheorem{example}[theorem]{Example}

\newcommand{\Oh}{\mathcal{O}}
\newcommand{\Par}{\mathcal{P}}
\newcommand{\Cells}{\mathsf{Cells}}
\newcommand{\cell}{\mathsf{cell}}
\newcommand{\Blocks}{\mathsf{Blocks}}
\newcommand{\integ}{\mathbb{Z}}
\newcommand{\floor}[1]{\lfloor #1 \rfloor}
\newcommand{\PP}{\mathbf{P}}
\newcommand{\sub}{\subseteq}

\newcommand{\defproblem}[3]{
\begin{problem}{#1}

\noindent
\textbf{Input:} #2

\noindent
\textbf{Output:} #3
\end{problem}
}

\newcommand{\R}{\mathcal{R}}
\newcommand{\I}{L}
\renewcommand{\P}{\mathcal{P}}

\title{On Abelian Longest Common Factor\\ with and without RLE}

 \author[1]{Szymon Grabowski}
 \author[2]{Tomasz Kociumaka}
 \author[2]{Jakub Radoszewski}

 \affil[1]{Łódź University of Technology, Łódź, Poland, e-mail:~\texttt{sgrabow@kis.p.lodz.pl}}
 
 \affil[2]{University of Warsaw, Warsaw, Poland, e-mail:~\texttt{kociumaka@mimuw.edu.pl}, \texttt{jrad@mimuw.edu.pl}}

\begin{document}
\maketitle
\begin{abstract}
\noindent
We consider the Abelian longest common factor problem in two scenarios: when input strings are uncompressed and are of size $n$, and when the input strings are run-length encoded and their compressed representations have size at most $m$. The alphabet size is denoted by $\sigma$.
For the uncompressed problem, we show an $o(n^2)$-time and $\Oh(n)$-space algorithm in the case of $\sigma=\Oh(1)$, making a non-trivial use of tabulation.
For the RLE-compressed problem, we show two algorithms: one working in $\Oh(m^2\sigma^2 \log^3 m)$ time and $\Oh(m (\sigma^2+\log^2 m))$ space, which employs line sweep, and one that works in $\Oh(m^3)$ time and $\Oh(m)$ space that applies in a careful way a sliding-window-based approach. The latter improves upon the previously known $\Oh(nm^2)$-time and $\Oh(m^4)$-time algorithms that were recently developed by Sugimoto et al.\ (IWOCA 2017) and Grabowski (SPIRE 2017), respectively.

\medskip\noindent \textbf{Keywords:} Abelian longest common factor problem,
jumbled pattern matching,
run-length encoding (RLE)
\end{abstract}

\section{Introduction}
\noindent
Two strings are called \emph{Abelian equivalent} if one of them is a permutation of the other. A string $p$ is called an \emph{Abelian factor} of a string $u$ if it is Abelian equivalent to one of the factors of $u$. Our aim in this work is to compute the longest common Abelian factor of two strings, $u$ and $v$. The longest common Abelian factor is an approximate similarity measure of strings in the scope of so-called non-standard stringology.

The longest common Abelian factor problem is denoted here as LCAF. We also consider a version of this problem, denoted as RLE-LCAF, in which the strings are specified by their run-length encodings (called here RLE representations). We denote: by $n$ the length of the strings, by $m$ the length of their RLE representations, and by $\sigma$ the size of the alphabet.

\subsection{Previous Results}
\paragraph{Related Abelian stringology problems}
The best studied problem in Abelian stringology is the \emph{jumbled indexing problem}. In this problem we are to index a text to support queries asking if a given string is an Abelian factor of the text. The query string is represented as a \emph{Parikh vector} which stores the number of occurrences of each letter from the alphabet in the pattern. In the case of a binary alphabet $\Sigma=\{0,1\}$, Cicalese et al.~\cite{PSC2009-10} proposed an index with $\Oh(n)$ size and $\Oh(1)$ query time and gave an $\Oh(n^2)$-time construction algorithm for the index. The key observation behind their index is that it suffices to store, for every query length $\ell$, the minimum and maximum number of ones in a factor of length $\ell$ of the text.

After a series of works of Burcsi et al.\ \cite{DBLP:conf/fun/BurcsiCFL10,DBLP:journals/ijfcs/BurcsiCFL12} and Moosa and Rahman~\cite{DBLP:journals/ipl/MoosaR10,DBLP:journals/jda/MoosaR12}, the construction of a binary jumbled index was improved to $\Oh(\frac{n^2}{(\log{n})^2})$. Furthermore, Hermelin et al.\ \cite{DBLP:journals/corr/HermelinLRW14} reduced binary jumbled indexing to all-pairs shortest paths problem and obtained preprocessing time of $\Oh(\frac{n^2}{2^{\Omega((\log n/\log \log n)^{0.5})}})$ (a similar reduction was shown by Bremner at el.\ \cite{DBLP:journals/algorithmica/BremnerCDEHILPT14}). Finally, Chan and Lewenstein~\cite{DBLP:conf/stoc/ChanL15} used techniques from additive combinatorics to improve the construction time of the binary index to $\Oh(n^{1.859})$. Subquadratic-time and space constructions of a jumbled index for any constant-sized alphabet were proposed in \cite{DBLP:journals/algorithmica/KociumakaRR17,DBLP:conf/stoc/ChanL15}.

Binary jumbled indexing was also considered in the case that the text is given as its RLE representation of length $m$. Constructions of the index working in $\Oh(n+m^2 \log m)$ time~\cite{DBLP:journals/tcs/AmirAHLLR16,DBLP:journals/ipl/BadkobehFKL13} and in $\Oh(n+m^2)$ time~\cite{DBLP:conf/cpm/CunhaDGWKS17,DBLP:journals/ipl/GiaquintaG13} were proposed.

As for other Abelian stringology problems, subquadratic-time algorithms for computing Abelian squares, Abelian periods, Abelian runs, Abelian covers, and Abelian borders over a constant-sized alphabet were designed in~\cite{DBLP:conf/macis/KociumakaRW15,AIMS}. Computation of Abelian borders, Abelian periods, and Abelian squares on strings specified by their RLE representations was considered in \cite{DBLP:journals/tcs/AmirAHLLR16,DBLP:journals/corr/SugimotoNIBT17}.

\paragraph{Longest common Abelian factor}
In the special case of a binary alphabet, the LCAF problem reduces in linear time to binary jumbled indexing \cite{DBLP:journals/ijfcs/AlatabbiILR16}. Indeed, it suffices to construct jumbled indexes of each of the strings and then to check, for each length $\ell$, if both strings contain an Abelian factor of length $\ell$ containing the same number of ones. Thus binary LCAF can be solved in $\Oh(n^{1.859})$ time using using the best known jumbled index \cite{DBLP:conf/stoc/ChanL15}. Moreover, binary RLE-LCAF can be solved in $\Oh(n+m^2)$ time and $\Oh(n)$ space by applying an efficient binary jumbled index for an RLE representation of the text~\cite{DBLP:conf/cpm/CunhaDGWKS17,DBLP:journals/ipl/GiaquintaG13}.

Over a general alphabet, for the LCAF problem the fastest known algorithms work in $\Oh(n^2 \sigma)$ time and $O(n)$ space, and in $\Oh(n^2 \log^2 n \log^{*} n)$ time and $O(n \log^2 n)$ space~\cite{DBLP:conf/spire/BadkobehGGNPS16}.

Known solutions for the RLE-LCAF problem (for arbitrary $\sigma$) work in $\Oh(nm^2)$ \cite{DBLP:journals/corr/SugimotoNIBT17}, 
in $\Oh(m^4)$, 
and in $\Oh(n^{3/2}\sigma \sqrt{m\log n})$ (provided that $m = \Oh(n / \log n)$) time \cite{DBLP:conf/spire/Grabowski17}, respectively.

\subsection{Our results}
\noindent
We first consider the LCAF problem when $\sigma$ is $\Oh(1)$.
Although subquadratic-time (and sometimes even $\Oh(n^{2-\varepsilon})$-time) algorithms are known for many Abelian stringology problems in the case of a constant-sized alphabet, no such algorithm was previously developed for the longest common Abelian factor problem. Moreover, the reduction to the jumbled indexing problem does not work for alphabet size $\sigma>2$. We present the first $o(n^2)$-time algorithms for LCAF with any $\sigma=\Oh(1)$. We first describe algorithms that work in $\Oh(n^2 / \log^{1/\sigma} n)$ time and $\Oh(n (\log\log n)^2 /\log n)$ time, and then combine both techniques to obtain $\Oh(n^2 / \log^{1+1/\sigma} n)$ time complexity. Our algorithms work in $\Oh(n)$ space. This approach is described in Section~\ref{sec:1}.

In Section~\ref{sec:red} we show a reduction of RLE-LCAF to a problem of intersecting rectangles in $\mathbb{Z}^\sigma$. This allows us to develop two solutions to RLA-LCAF, that work in:
\begin{itemize}
    \item $\Oh(m^2\sigma^2 \log^3 m)$ time and $\Oh(m (\sigma^2+\log^2 m))$ space (see Section~\ref{sec:2}), and
    \item $\Oh(m^3)$ time and $\Oh(m)$ space (see Section~\ref{sec:3}).
  \end{itemize}
  The latter improves upon the time complexities of the algorithms of Sugimoto et al.~\cite{DBLP:journals/corr/SugimotoNIBT17} ($\Oh(nm^2)$) and Grabowski~\cite{DBLP:conf/spire/Grabowski17} ($\Oh(m^4)$).
In the case of constant $\sigma$, we obtain the following improved versions of the former:
  \begin{itemize}
    \item in $\Oh(m^2 \sqrt{\log \log m})$ time in expectation
  or $\Oh(m^2 \log \log m)$ time deterministically and $\Oh(m)$ space for $\sigma=2$, and
    \item in $\Oh(m^2 \log^2 m)$ time and $\Oh(m \log m)$ space for $\sigma=3$.
  \end{itemize}

\section{Preliminaries}
\noindent
We assume that the symbols of a string are numbered starting from $1$.
A factor of string $u$ spanning from position $i$ to position $j$ (inclusive) will be denoted as $u[i \ldots j]$.
The string $u$ is a concatenation of symbols over an alphabet $\Sigma = \{1, 2, \ldots, \sigma\}$.
The concatenation of two strings, $u$ and $v$, is denoted as $u v$.
By $\P(u)$ we denote the Parikh vector of a string $u$.
It is defined as a vector (array) of size $\sigma$ storing the number of occurrences of each alphabet symbol in $u$.
Formally, $\P(u)[c] = k$ iff $|\{i: u[i] = c\}| = k$, for any alphabet symbol $c$.
Two Parikh vectors are equal when the equality of corresponding counters holds for all symbols from $\Sigma$.
We also define $\PP(u)$ to be the family of Parikh vectors of all factors of $u$.
Recall that our task is to find a vector $P\in \PP(s)\cap \PP(t)$ maximizing $\|P\|_{\ell_1}$.

The run-length encoding (RLE) representation of string $u$ of length $n$ is a sequence of $m$ non-empty substrings $u_i$, $1 \leq i \leq m$, such that $u = u_1 u_2 \ldots u_m$, the number of distinct alphabet symbols in each $u_i$ is one, and the number of distinct symbols in every concatenation $u_i u_{i+1}$ is two.
It is trivial to obtain the RLE representation of $u$ in $\Oh(n)$ time, but in the (RLE-related) algorithms presented in this work we assume that the input strings are already RLE-compressed.
The RLE representation can be stored in $\Oh(m)$ space.
In this work, the RLE representations of strings are denoted by capital letters.

The notation $u \sim v$ tells that the strings $u$ and $v$ are Abelian equivalent.
We say that string $p$ is an Abelian factor of string $u$ if there exist indices $i$ and $j$ such that $u[i \ldots j] \sim p$.
A common Abelian factor of two strings, $u$ and $v$, 
is a pair of factors $u[i' \ldots j']$ and $v[i'' \ldots j'']$ 
such that $u[i' \ldots j'] \sim v[i'' \ldots j'']$ 
(obviously, $j' - i' = j'' - i''$).


All logarithms considered in this work are of base 2.

Let us formally state the problems studied in this work.

\defproblem{LCAF}{
  two strings $s$ and $t$ over an alphabet of size $\sigma = \Oh(1)$, each of length at most $n$

}{
  the length of the longest common Abelian factor of $s$ and $t$

}

\defproblem{RLE-LCAF}{
  RLE representations $S$ and $T$ of two strings $s$ and $t$ over an alphabet of size $\sigma = \Oh(1)$, each representation of length at most $m$ and each string of length at most $n$

}{
  the length of the longest common Abelian factor of $s$ and $t$

}

We assume the word-RAM model with machine words of $w = \Theta(\log n)$ bits.

\newcommand{\tsort}{T_{\mathrm{sort}}}
\newcommand{\ssort}{S_{\mathrm{sort}}}

  Let $\tsort(m)$ and $\ssort(m)$ denote the time and space to sort $m$ integers (in the word-RAM). Currently the fastest randomized algorithm works in $\Oh(m^2 \sqrt{\log \log m})$ time in expectation \cite{DBLP:conf/focs/HanT02} and the fastest deterministic algorithm works in $\Oh(m^2 \log \log m)$ worst-case time \cite{DBLP:journals/jal/Han04}. Both algorithms require $\Oh(m)$ space.

\section{Algorithm for LCAF over Constant-Sized Alphabet}\label{sec:1}
\noindent
In this section, we present three slightly subquadratic algorithms for LCAF with constant-size alphabets.
In this setting, \cite[Sec.~4]{DBLP:conf/spire/BadkobehGGNPS16} gives an algorithm with $O(n^2)$ time and $O(n)$ space.

The input consists in two strings, $s$ and $t$, of total length $n$, with symbols over an alphabet 
$\Sigma$ of size $\sigma=\Oh(1)$.
We assume that the symbols of $s$ and $t$ are packed into $\Oh(n \log \sigma/w)$ machine words
so that any $\Oh(w/\log\sigma)$ consecutive characters can be retrieved in constant time.

We start by introducing the subdivision of the space $\integ^\sigma$ into cells, which is the main concept common to our algorithms.
Next, in \cref{sec:one}, we present a simple solution running in $\Oh(n^2 / \log^{1/\sigma} n)$ time,
which is improved to $\Oh((n\log\log n)^2 /\log n)$ in \cref{sec:two}.
Finally, in \cref{sec:three}, we derive an $\Oh(n^2 / \log^{1+1/\sigma} n)$ bound on the running time.
For super-constant alphabet size $\sigma$, the time complexities of the first two algorithms
increase by factors polynomial in $\sigma$ (which we analyze in detail),
while for the last solution the extra factor is exponential in $\sigma$ (and we omit the detailed analysis).

\subsection{Orthogonal Cells in $\integ^{\sigma}$}\label{sec:common}
Note that $\PP(x)\sub \integ^{\sigma}$; the key tool in our algorithms is a subdivision of $\integ^\sigma$ into orthogonal \emph{cells}. 
For a vector $P=(p_1,\ldots,p_{\sigma})$ and a positive integer $b$,
we define 
$$\floor{P/b} = (\floor{p_1/b},\ldots,\floor{p_\sigma/b})\quad\quad\text{and}\quad\quad P \bmod b = (p_1 \bmod b,\ldots,p_\sigma \bmod b).$$
We define cells (of \emph{side length} $b$) as equivalence classes with respect to the mapping $P\leadsto \floor{P/b}$.
The family of all such cells is denoted by $\Cells_b$, and $\cell_b : \integ^{\sigma}\to \Cells_b$ is the canonical projection.
In the algorithms, a cell $C\in \Cells_b$ is identified by the value $\floor{P/b}$ common to all $P\in C$,
and any vector $P\in C$ is identified by $P\bmod b$.

\subsection{$\Oh(n^2 / \log^{1/\sigma} n)$ time}\label{sec:one}
\noindent
Our solution uses a parameter $b\ge \sigma$, whose value will be settled later. 
We process the strings $s$ and $t$ in $\Oh(n/b)$ \emph{stages};
each stage is responsible for factors of length within a range $R$ of size $|R|\le b$. In other words,
our task is to find the maximum common Abelian factor of $s$ and $t$ whose length belongs to 
$R$ or to certify that there is no such common Abelian factor.

The main mechanism used by our algorithm is a simple bucketing: 
for each considered factor $u$, its Parikh vector $\P(u)$ will be 
inserted into a bucket corresponding to the cell $\cell_b(\P(u))$.
Then, we shall scan all non-empty buckets in search of a vector inserted both as an Abelian factor of $s$ and of $t$.

In the first solution, we store the contents of each bucket simply as a bitmask of size $b^\sigma$ (equal to the cell size).
We require that $b^\sigma \leq w = \Theta(\log n)$, which implies $b = \Oh(\log^{1/\sigma} n)$.
As a result of processing $s$, for each cell $C$ we shall guarantee that in the corresponding bucket the bit representing $Q\in \{0,\ldots,b-1\}^\sigma$
is set if and only if $s$ contains a factor $u$ with $|u|\in R$, $\cell_b(\P(u))=C$, and $\P(u)\bmod b = Q$.
The other string $t$ is handled in the same way; for clarity, below we discuss processing $s$ only.

First, we scan $s$ in order to construct a list of $\Oh(n\sigma)$ \emph{requests} to insert certain vectors to certain buckets.
A single request consists of bucket's identifier and a bitmask representing Parikh vectors to be inserted there.
There might be many requests concerning the same bucket, but we shall make sure that after all these insertions
are performed, the contents of each bucket are as specified above.

In the $j$-th step, we consider all factors $u$ (with $|u|\in R$) starting at position $j$,
and our aim is to create insertion requests responsible for their Parikh vectors.
Let these factors be $u^{(0)},\ldots,u^{(b')}$ (ordered by increasing lengths) for $0\le b'<b$,
and let $B^{(i)}=\floor{\P(u^{(i)})/b}$.
Note that for each coordinate $d$, we have $$B^{(0)}[d] \le \cdots \le B^{(b'-1)}[d] \le B^{(0)}[d]+1.$$
Consequently, the sequence $B^{(0)},\ldots,B^{(b')}$ consists of at most $\sigma+1$ distinct cells.

Furthermore, we note that the shift $B^{(i)}-B^{(0)}$ and the vector $\P(u^{(i)})\bmod b$
depend only on the vector $\P(u^{(0)})\bmod b$ and the last characters of $u^{(1)},\ldots,u^{(b')}$.
Therefore, we can build a lookup table whose keys consist of
\begin{enumerate}[(a)]
  \item a vector $Q\in \{0,\ldots,b-1\}^\sigma$, corresponding to $\P(u^{(0)})\bmod b$, and
  \item up to $b-1$ symbols $c_1,\ldots,c_{b'}$, corresponding to the last characters of $u^{(1)},\ldots,u^{(b')}$.
\end{enumerate}
As the values, we store up to $\sigma+1$ insertion requests to buckets,
with cell identifiers stored relative to $B^{(0)}$.
The key size is thus $\Oh(\sigma \cdot \log b + (b-1) \cdot \log\sigma) = \Oh(b\log \sigma)=o(\log n)$ bits, 
while the value contains $\Oh(\sigma)$ machine words.
Hence, the lookup table can be constructed in $o(n\sigma)$ time (and it can be used across all stages).

In the $j$-th step, we retrieve the necessary insertion requests from the lookup table
and we add $B^{(0)}$ to shift the cell identifiers.
As a result, the pass over $s$ produces $\Oh(n\sigma)$ insertion requests to buckets,
representing the Parikh vectors of all substrings $u$ of $s$ with $|u|\in R$.
The string $t$ is processed analogously.

Recall that our task is to decide if the requests from $s$ and $t$ contain a common entry.
To verify this, we group the requests by the cell identifiers and process each cell independently.
A single cell identifier takes $\Oh(\sigma \log n)$ bits and there are $\Oh(n\sigma)$ requests to be grouped,
so this process can be implemented in $\Oh(n\sigma^2)$ time using radix sort.

For each cell, we build the corresponding buckets, separately for $s$ and $t$.
A single request is handled in constant time with a simple bitwise-OR operation on two machine words.

After that, to check if the two buckets contain a common entry,
we perform a bitwise-AND operation on the two bitmasks.
When the result of this operation is non-zero, we find in constant time 
(e.g., using another lookup table) a set bit representing a Parikh vector with maximum $\ell_1$ norm
(as our goal is obviously to find the longest common Abelian factor).

The total running time of the presented algorithm is $\Oh(n^2\sigma^2 / b) = \Oh((n\sigma)^2 / \log^{1/\sigma} n)$ 
and the space consumption is $\Oh(n\sigma^2)$ words.

\subsection{$\Oh(n (\log\log n)^2 /\log n)$ time}\label{sec:two}
\noindent
Recall that the cell size is $b^\sigma$. Hence,
its elements can be represented using $\log (b^\sigma)=\sigma \log b$ bits each.
In this solution, we change the bucket representation to a packed list (see~\cite[Fact~5.1]{DBLP:journals/algorithmica/KociumakaRR17}),
which is simply a concatenation of the $(\sigma \log b)$-bit integers representing its contents (possibly with repetitions).

We also use this representation in the insertion requests stored in the lookup table.
The key size is still $\Oh(b\log \sigma)$ bits,
while the value size is now $\Oh(\sigma^2 + b\cdot \sigma \log b)=\Oh(\sigma b\log b)$ bits.
We take $b= o(\log n/ \log \sigma)$ to make sure that the table size and construction time are $o(n\sigma)$.

The total size of all $\Oh(n\sigma)$ requests constructed in a single stage is now $\Oh(n  b \sigma \log b)$ bits.
As each bucket is represented using a packed list, concatenation is used to create
a packed list representing it (with entries coming from one or more requests).

To answer LCAF, for each cell (with non-empty buckets) we need to check if the buckets constructed for $s$ and $t$ contain a common entry. 
To this end, we use Lemma~5.3 from~\cite{DBLP:journals/algorithmica/KociumakaRR17},
which lets us compute for a given packed list the $FirstOcc$ bitmask, 
which marks positions where each entry occurs for the first time in the packed list.
We construct $FirstOcc$ bitmasks $F_1$ and $F_2$ for the packed lists representing the two buckets, and a bitmask $F_3$ for the concatenation of those two lists.
Finally, we observe that the buckets have no element in common if and only if $F_3 = F_1 F_2$.

Let us now analyze the time and space complexity of the described variant.
In a single stage, we have $\Oh(n\sigma)$ packed lists with $\Oh(nb)$ entries
in total, and the universe size is $N=b^\sigma$.
By \cite[Lemma~5.3]{DBLP:journals/algorithmica/KociumakaRR17},
the $FirstOcc$ bitmasks can be computed in $\Oh(n\sigma+ nb\log^2(b^\sigma)/w)=\Oh(n\sigma + nb\sigma^2\log^2 b/w)$ time,
while the space complexity is $\Oh(n\sigma + n\sigma b\log b/w)$ words.
Across all stages, the overall running time becomes $\Oh(n^2\sigma^2 /b + n^2\sigma^2 \log^2 b / w)$,
whereas the space consumption is $\Oh(n\sigma^2 + n\sigma b\log b/w)$ words.
Setting $b=\Theta(\log n/ \log \log n)$, we obtain the promised $\Oh( (n\sigma \log\log n)^2 /\log n)$ time
using $\Oh(n\sigma^2)$ words of space. 

\subsection{$\Oh(n^2 / \log^{1+1/\sigma} n)$ time}\label{sec:three}
In our final solution, instead of using a single partition of $\integ^\sigma$ into cells,
we recursively subdivide $\integ^\sigma$ into cells of smaller and smaller side length.
For each cell $C$, we solve the LCAF problem restricted to $C$, i.e.,
we find a vector  $P\in C\cap \PP(s)\cap \PP(t)$ maximizing $\|P\|_{\ell_1}$.
Depending on the side length $b$ and the size of the corresponding buckets,
we either solve this task directly, or we partition $C$ into $2^\sigma$ smaller cells
and recurse on each of them.

Our main improvement compared to \cref{sec:two} is a more space-efficient encoding of $\PP_C(v):= C\cap \PP(v)$ for fixed $C$.
To develop it, we also recursively subdivide $\mathbb{N}$ into \emph{blocks}:
for a parameter $b$, the blocks $\Blocks_b$ are consecutive intervals of length $b$ (the last block might be shorter).
For a string $v$ and two blocks $I,J\in \Blocks_b$, we define 
$$\PP_{(I,J)}(v)= \{\Par(v[i\dotdot j]) : i\in I\text{ and }j\in J\}.$$
For a cell $C\in \Cells_b$, we also define $\Blocks_C(v)$ as the set of all pairs $(I,J)\in \Blocks_b^2$
such that $C$ intersects the \emph{bounding box} of $\PP_{(I,J)}(v)$.

Let us fix $C\in \Cells_b$ and a string $v$.
For any $([i\dotdot i'],[j\dotdot j'])=(I,J)\in \Blocks_C(v)$, we keep $C\cap\PP_{(I,J)}(v)$ in $\Oh(b)$ bits as follows.
We store $\Par(v[i'\dotdot j])$ relative to $C$ (which takes $\Oh(\sigma\log b)=\Oh(\log b)$ bits)
as well as the characters $v[i],\ldots,v[i']$ and $v[j],\ldots,v[j']$ (which take $\Oh(b\log \sigma)=\Oh(b)$ bits).
The set $\PP_C(v)$ is then simply kept as a concatenation of the representations of $C\cap \PP_{(I,J)}(v)$
over $(I,J)\in \Blocks_C(v)$.
The size of this representation is $\Oh(1+|\Blocks_C(v)|b/\log n)$ machine words.
Moreover, it is easy to construct it in $\Oh(n)$ time for $b=n$ (and the block $[1,\ldots,n]$).

To solve the problem for a cell $C\in \Cells_b$, we consider three cases.
If $b^\sigma < w=\Theta(\log n)$, we convert the representations of $\PP_C(s)$ and $\PP_C(t)$
into bitmasks: we scan them word by word, use a lookup table to convert each word into a bitmask,
and combine these bitmasks with bitwise-OR. Finally, we bitwise-AND the bitmasks obtained for $s$ and $t$.

On the other hand, if $\PP_C(s)$ and $\PP_C(t)$ take $\log n$ bits in total,
we use another precomputed table to extract the answer.

In the remaining cases, we partition $C$ into $2^\sigma$ cells $C'\in \Cells_{b/2}$.
For each such cell $C'$, we scan the representation of $\PP_C(v)$
and construct an analogous representation of $\PP_{C'}(v)$:
for each $(I,J)\in \Blocks_C(v)$, we consider all four pairs of blocks $I',J'\in \Blocks_{b/2}$
with $I'\sub I$ and $J'\sub J$, construct  the representation of $C'\cap \PP_{(I',J')}(v)$,
and append it to the representation of $\PP_{C'}(v)$ provided that $(I',J')\in \Blocks_{C'}(v)$.
If $b=\Omega(\log n)$, it is easy to process each pair $(I,J)\in \Blocks_{C}(v)$ in $\Oh(b/\log n)$ time.
For $b=o(\log n)$, on the other hand, we build a lookup table to exploit bit parallelism.

We conclude with a complexity analysis.
Observe that $(I,J)\in \Blocks_{C}(v)$ for at most $3^\sigma = \Oh(1)$ cells $C\in \Cells_b$,
so the total size of the representations of $\PP_C(v)$ in a single level (for fixed $b$) is $\Oh(n^2/b)$ bits.
Since we terminate the recursion whenever $\PP_C(s)$ and $\PP_C(t)$ contain $\log n$ bits,
the processing time is $\Oh(n^2/(b\log n))$ per level.
This bound forms a geometric progression dominated by the largest term
$\Oh(n^2/\log^{1+1/\sigma} n)$ arising from $b=\Theta(\log^{1/\sigma} n)$.

The space complexity is at most $\Oh(n)$ bits within each recursive call,
because $|\Blocks_C(v)|=\Oh(n/b)$ for $C\in \Cells_b$.
Overall, this gives $\Oh(n\log n)$ bits, i.e., $\Oh(n)$ machine words.

\section{RLE-LCAF as a Problem of Intersecting Rectangles}\label{sec:red}

\newcommand{\rect}{\mathit{rect}}
\newcommand{\Rect}{\mathit{Rect}}
\newcommand{\placeholder}{\ensuremath{\star}\xspace}

\noindent
In this section we show a reduction of RLE-LCAF to a problem
of intersecting rectangles in the $\sigma$-dimensional space $\integ^{\sigma}$. This reduction is then used in both the algorithms for RLE-LCAF in the next two sections. We also develop basic properties of the resulting rectangle sets.

We define a \emph{rectangle in $d$-dimensional space $\integ_+^d$} ($d \ge 2$) as a Cartesian product of $d$ closed intervals, such that at least $d-2$ of them are singletons.
E.g., $\{3\} \times [2,5] \times [1,7] \times \{0\}$ is a rectangle in $\integ_+^4$.

For an RLE-representation $V$ of string $v$ and indices $i,j$ such that $1 \le i \le j \le |V|$, we denote by $\rect_V(i,j)$ a rectangle with opposite corners $\P(V_i \ldots V_j)$ and $\P(V_{i+1} \ldots V_{j-1})$. If $i=j$ or $i+1=j$, the latter is the zero vector. Let
  $$\Rect_V=\{\rect_V(i,j)\,:\,1 \le i \le j \le |V|\}.$$
  
  \begin{observation}\label{obs:rect_simple}
    The integer points in rectangles from $\Rect_V$ represent the set $\PP(v)$.
  \end{observation}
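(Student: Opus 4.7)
The plan is to prove the observation by establishing both inclusions between $\PP(v)$ and the set of integer points in $\bigcup \Rect_V$. The geometric principle driving the argument is that any factor of $v$ that meets the runs $V_i, V_{i+1}, \ldots, V_j$ must contain the entire middle block $V_{i+1} \cdots V_{j-1}$ but may include an arbitrary suffix of $V_i$ and an arbitrary prefix of $V_j$; since these two partial choices vary only along the coordinates indexed by the characters of $V_i$ and $V_j$, the set of achievable Parikh vectors is exactly the axis-aligned box $\rect_V(i,j)$ lying in a two-dimensional (or lower) subspace of $\integ^\sigma$.

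For the inclusion $\PP(v) \subseteq \bigcup_{R \in \Rect_V}(R \cap \integ^\sigma)$, I would take an arbitrary factor $v[p \dotdot q]$, let $i \le j$ be the unique run indices with $p$ inside $V_i$ and $q$ inside $V_j$, and denote by $c_i, c_j$ the characters of these runs. Writing $a, b$ for the numbers of symbols contributed from $V_i$ and $V_j$ respectively, one has
$$\P(v[p \dotdot q]) = \P(V_{i+1} \cdots V_{j-1}) + a \cdot e_{c_i} + b \cdot e_{c_j}$$
with $0 \le a \le |V_i|$ and $0 \le b \le |V_j|$, which lies between the opposite corners $\P(V_{i+1} \cdots V_{j-1})$ (attained at $a=b=0$) and $\P(V_i \cdots V_j)$ (attained at $a=|V_i|$, $b=|V_j|$) of $\rect_V(i,j)$.

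For the reverse inclusion, I would pick any integer point $P \in \rect_V(i,j)$. Since the defining intervals of the rectangle are singletons outside of the $c_i$- and $c_j$-coordinates, the entries of $P$ away from these two symbols agree with $\P(V_{i+1} \cdots V_{j-1})$, while the differences on the $c_i$- and $c_j$-coordinates yield nonnegative integers $a \le |V_i|$ and $b \le |V_j|$. The factor of $v$ obtained by concatenating the last $a$ symbols of $V_i$, the block $V_{i+1} \cdots V_{j-1}$, and the first $b$ symbols of $V_j$ is then a contiguous substring of $v$ whose Parikh vector equals $P$.

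The main subtlety, and essentially the only place where care is needed, is in the degenerate cases: $i=j$ (where the middle block is absent and the rectangle is a segment between $0$ and $\P(V_i)$), $i+1=j$ (where $V_{i+1} \cdots V_{j-1}$ is empty but both end runs can be partial), and the possibility that $c_i = c_j$ for non-adjacent runs (where the rectangle collapses from two dimensions to one). In the last case, the decomposition of the offset into $a \cdot e_{c_i} + b \cdot e_{c_j}$ is not unique, but any valid pair $(a,b)$ satisfying $a \le |V_i|$, $b \le |V_j|$, and the correct sum produces a legal contiguous factor, so the reconstruction still goes through.
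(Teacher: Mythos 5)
Your proof is correct and is exactly the reasoning the paper leaves implicit: \cref{obs:rect_simple} is stated without proof, and the intended justification is precisely your decomposition of a factor into a suffix of run $V_i$, the full middle block $V_{i+1}\cdots V_{j-1}$, and a prefix of run $V_j$, together with the reverse reconstruction. Your attention to the degenerate cases ($i=j$, $i+1=j$, and $c_i=c_j$ collapsing the rectangle to one dimension) is a welcome level of care that the paper does not spell out.
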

    
      This observation lets us reduce the RLE-LCAF problem to the following auxiliary problem.

\defproblem{Maximal Intersection Point of Rectangles in $\integ_+^d$\label{prob:rect}}{
  two families $\R_1$ and $\R_2$ of rectangles in $d$-dimensional space, each containing at most $N$ rectangles

}{
  a common point of a rectangle from $\R_1$ and a rectangle from $\R_2$ with the maximum $\ell_1$-norm
  or ``NO'' if no two rectangles from $\R_1$ and $\R_2$ intersect

}

\begin{lemma}\label{lem:red}
  The RLE-LCAF problem is equivalent to Problem~\ref{prob:rect} with $N=m^2$, $d=\sigma$, $\R_1=\Rect_S$, and $\R_2=\Rect_T$.
\end{lemma}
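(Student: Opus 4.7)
The plan is to combine Observation~\ref{obs:rect_simple}, applied to both $s$ and $t$, with the Parikh-vector characterization of LCAF from the Preliminaries. Since that observation already identifies $\PP(v)$ with the integer points contained in rectangles from $\Rect_V$, the lemma essentially reduces to a definition-chasing exercise plus a parameter check.

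Concretely, applying Observation~\ref{obs:rect_simple} to $s$ with RLE $S$ gives that $P\in \PP(s)$ iff $P$ is an integer point of some rectangle in $\R_1=\Rect_S$, and analogously $P\in \PP(t)$ iff $P$ is an integer point of some rectangle in $\R_2=\Rect_T$. Hence $P\in \PP(s)\cap \PP(t)$ iff $P$ is a common integer point of some pair $(R_1,R_2)\in \R_1\times \R_2$. Because the length of a factor equals $\|\P(\cdot)\|_{\ell_1}$, maximizing LCAF length coincides with maximizing $\|P\|_{\ell_1}$ over such common integer points, which is exactly what Problem~\ref{prob:rect} asks. The reduction also runs in reverse: from an optimizer $P$ of Problem~\ref{prob:rect}, one can trace back the indices $(i,j)$ defining containing rectangles in $\R_1,\R_2$ to concrete factors of $s$ and $t$ realizing $P$, so the two problems are genuinely equivalent.

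For the parameters, $d=\sigma$ is immediate because Parikh vectors live in $\integ^\sigma$, and $|\Rect_V|\le \binom{|V|+1}{2}=\Oh(|V|^2)$ because $\rect_V(i,j)$ is indexed by pairs $1\le i\le j\le |V|$; with $|S|,|T|\le m$ this gives $N=m^2$. The ``NO'' output of Problem~\ref{prob:rect} never arises here, since the zero vector lies in every $\rect_V(i,i)$ (corresponding to the empty factor), so $\R_1$ and $\R_2$ always share the origin. I expect no real obstacle: all the combinatorial bookkeeping about how factors straddling or contained within RLE runs correspond to integer points of specific rectangles $\rect_V(i,j)$ has been encapsulated in Observation~\ref{obs:rect_simple}, so the proof of Lemma~\ref{lem:red} itself is short and mostly notational.
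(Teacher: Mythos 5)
Your proposal is correct and follows the same route as the paper: invoke Observation~\ref{obs:rect_simple} for both strings and observe that a maximum-$\ell_1$-norm common integer point of $\Rect_S$ and $\Rect_T$ is exactly the Parikh vector of a longest common Abelian factor. The paper's proof is even terser, leaving the parameter check and the reverse direction implicit, but the content is the same.
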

\begin{proof}
By Observation~\ref{obs:rect_simple}, the points in rectangles from $\Rect_S$ and $\Rect_T$ represent Parikh vectors of all factors of $s$ and $t$, respectively.
  Hence, the point returned by the solution to Problem~\ref{prob:rect} for $\R_1$ and $\R_2$ represents the Parikh vector of the longest common Abelian factor of $s$ and $t$.
\end{proof}

\subsection{Properties of Rectangles in $\Rect_V$}

  For a rectangle $R$, by $\I(R)$ we denote the interval of $\ell_1$-norms of points in $R$. For an integer $l$, by $\Rect_V(l)\sub \Rect_V$ we denote a subset which consists of rectangles $R$ such that $l \in \I(R)$.
  
      For a given index $i \in \{1,\ldots,m\}$, by $j(i,l)$ we denote the minimum index $j$ such that $|V_i|+ \ldots+ |V_j| \ge l$. If no such index exists, we set $j(i,l)=m+1$. We further set $j(m+1,l)=m+1$. Indices $j(i,l)$ allow us to characterize the set $\Rect_V(l)$ as follows.
    
    \begin{observation}\label{obs:rect-l}
      $\Rect_V(l) = \{\rect_V(i,j)\,:\,i=1,\ldots,m,\,j(i,l) \le j \le j(i+1,l),\,j \le m\}$.
    \end{observation}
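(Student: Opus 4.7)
The plan is to translate the condition $l \in \I(\rect_V(i,j))$ into explicit bounds on $j$ by reading it through the definition of $j(\cdot,l)$.

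First I would determine $\I(\rect_V(i,j))$ exactly. The two corners of $\rect_V(i,j)$ have $\ell_1$-norms $|V_i|+\cdots+|V_j|$ and $|V_{i+1}|+\cdots+|V_{j-1}|$ (the latter being $0$ by convention when $i+1>j-1$). Because each $V_k$ is a maximal single-character run, the two corners differ only in the at most two coordinates indexed by the characters of $V_i$ and $V_j$, and in each of those coordinates the admissible values form an integer interval of length $|V_i|$ or $|V_j|$, varying independently. Hence the $\ell_1$-norm of an integer point in $\rect_V(i,j)$ attains every integer value between the two corner norms, giving
\[
\I(\rect_V(i,j)) \;=\; \bigl[\,|V_{i+1}|+\cdots+|V_{j-1}|,\ |V_i|+\cdots+|V_j|\,\bigr].
\]

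Next I would rephrase the two endpoint conditions using $j(\cdot,l)$. Since the partial sum $|V_i|+\cdots+|V_j|$ is nondecreasing in $j$ and $j(i,l)$ is by definition the smallest $j$ at which it reaches $l$, the upper-endpoint condition $l \le |V_i|+\cdots+|V_j|$ is equivalent to $j \ge j(i,l)$; the convention $j(i,l)=m+1$ when no such $j$ exists automatically rules out invalid rectangles once combined with the explicit $j \le m$ clause. Symmetrically, the lower-endpoint condition $|V_{i+1}|+\cdots+|V_{j-1}| \le l$ translates (via the same monotonicity applied to the sequence starting at $i+1$) to $j-1 < j(i+1,l)$, i.e.\ $j \le j(i+1,l)$. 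Conjoining these two inequalities with $j \le m$ yields exactly the set on the right-hand side of the claimed identity, which proves both inclusions in one go.

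The part that demands most care is the chain of boundary conventions --- the degenerate case $i=j$ (where the lower corner vanishes and the lower-endpoint inequality is automatic), the case where $j(i+1,l)$ lands on an index at which the partial sum is exactly $l$, and the cutoff $j(\cdot,l)=m+1$ --- which I would walk through explicitly to avoid off-by-one slips. None of them alters the overall argument, but each must be checked so that the claim is justified as an equality of sets rather than merely an inclusion.
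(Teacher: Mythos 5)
The paper states this as an Observation without proof, so there is no official argument to compare against; your strategy --- computing $\I(\rect_V(i,j))$ explicitly as the interval between the two corner norms and then translating $l\in\I(\rect_V(i,j))$ into bounds on $j$ via monotonicity of the partial sums --- is clearly the intended one. The first half is fine: $\I(\rect_V(i,j))=\bigl[\,|V_{i+1}|+\cdots+|V_{j-1}|,\ |V_i|+\cdots+|V_j|\,\bigr]$, and the upper-endpoint condition $l\le |V_i|+\cdots+|V_j|$ is indeed equivalent to $j\ge j(i,l)$.

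The gap is in the lower-endpoint translation, at precisely the boundary case you flag and then dismiss. Since $j(i+1,l)$ is the least index at which $|V_{i+1}|+\cdots+|V_{j'}|$ reaches $l$, and these partial sums are strictly increasing, the condition $j\le j(i+1,l)$ is equivalent to the \emph{strict} inequality $|V_{i+1}|+\cdots+|V_{j-1}|<l$, not to $|V_{i+1}|+\cdots+|V_{j-1}|\le l$. When the inner corner has norm exactly $l$, the rectangle satisfies $l\in\I(\rect_V(i,j))$ yet $j=j(i+1,l)+1$, so it lies outside the right-hand side. This is not vacuous: for $v=abba$ (so $|V_1|=1$, $|V_2|=2$, $|V_3|=1$) and $l=2$ we get $\I(\rect_V(1,3))=[2,4]\ni 2$ while $j(2,2)=2<3$, so the stated set equality literally fails and no proof of it can close this step. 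What your argument actually establishes is that the right-hand side equals $\{R\in\Rect_V:\min\I(R)<l\le\max\I(R)\}$; the omitted rectangles are exactly those with $\min\I(R)=l$, whose unique point of norm $l$ is the outer corner $\P(V_{i+1}\cdots V_{j-1})$ of $\rect_V(i+1,j-1)$, a rectangle that \emph{is} contained in the right-hand side. That is why the discrepancy is harmless for the paper's downstream uses of the observation, but your write-up should say this explicitly (or restate the observation with the strict inequality) rather than assert that the boundary case ``does not alter the overall argument,'' because it is the one case where the claimed equivalence breaks.
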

  
  The following lemma states some algorithmic properties of the sets $\Rect_V(l)$.
  
  \begin{lemma}\label{lem:rect-l}
    Let $V$ be an RLE representation of size $m$ of a string $v$.
    \begin{enumerate}[(a)]
      \item\label{aaa} For a given $l \in \{0,\ldots,|v|\}$, the set $\Rect_V(l)$ has at most $2m$ elements and can be computed in $\Oh(m)$ time.
      \item\label{bbb} All sets $\Rect_V(l) \setminus \Rect_V(l-1)$ and $\Rect_V(l-1) \setminus \Rect_V(l)$ for $l=1,\ldots,|v|$ such that at least one of these sets is non-empty can be computed in $\Oh(m\tsort(m))$ total time and $\Oh(\ssort(m))$ space.
    \end{enumerate}
  \end{lemma}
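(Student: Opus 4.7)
For part~(a), I would rely on \cref{obs:rect-l}. The size bound follows by a telescoping estimate:
\[
|\Rect_V(l)| \le \sum_{i=1}^m \bigl(j(i+1,l) - j(i,l) + 1\bigr) = \bigl(j(m+1,l) - j(1,l)\bigr) + m \le 2m,
\]
using $j(m+1,l)=m+1$ and $j(1,l)\ge 1$. For the construction, I would precompute the prefix sums $L_j := |V_1|+\cdots+|V_j|$ in $\Oh(m)$ time, then compute all values $j(1,l),\ldots,j(m+1,l)$ together in $\Oh(m)$ time via a two-pointer sweep over $i$ (valid because $j(\cdot,l)$ is non-decreasing in $i$), and finally list the $(i,j)$-pairs from \cref{obs:rect-l}, each in $\Oh(1)$ time.

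For part~(b), the plan is to recast the problem as an event-sorting task. For each rectangle $R=\rect_V(i,j)$, the endpoints of $\I(R)$ are read off from the prefix sums: $\min \I(R)=L_{j-1}-L_i$ when $j-i\ge 2$ and $\min \I(R)=0$ otherwise, while $\max \I(R)=L_j-L_{i-1}$ in all cases. The rectangle joins $\Rect_V(l)$ for the first time at $l=\min \I(R)$ (``entry'' event) and drops out at $l=\max \I(R)+1$ (``exit'' event). Emitting these $\Oh(m^2)$ events in non-decreasing order of $l$ and grouping consecutive events with the same $l$ yields exactly the required symmetric differences.

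The main obstacle is executing this merge within $\Oh(m\tsort(m))$ time and $\Oh(\ssort(m))=\Oh(m)$ space. The plan is to exploit that, for each fixed $i$, the entry events of $\{\rect_V(i,j)\}_{j\ge i}$ already appear in non-decreasing order of $l$ as $j$ grows (and likewise for the exit events), since both endpoints are monotone functions of $L_j$. This decomposes the $\Oh(m^2)$ events into $\Oh(m)$ implicitly sorted streams of length $\Oh(m)$, each element of which is computable in $\Oh(1)$ time from the prefix-sum array. A heap-based $m$-way merge would cost $\Oh(m^2\log m)$, which is too slow; instead, I would use an integer priority queue whose operations have amortized cost $\Oh(\tsort(m)/m)$ (obtained generically from the sorting primitive via the Thorup-style sorting--priority-queue equivalence), giving $\Oh(m^2)\cdot \Oh(\tsort(m)/m)=\Oh(m\tsort(m))$ time and $\Oh(m)$ working space for the merge. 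A final linear pass groups events sharing the same $l$ into their symmetric-difference sets within the same budget.
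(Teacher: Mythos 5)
Your proposal is correct and follows essentially the same route as the paper: part~(a) is the identical telescoping argument over \cref{obs:rect-l}, and part~(b) is the same algorithm --- per-index-$i$ monotone threshold sequences merged by an integer priority queue with $\Oh(\tsort(m)/m)$ amortized operations via Thorup's sorting-to-priority-queue reduction, over $\Oh(m^2)$ total operations. The only (cosmetic) difference is that you phrase the merge as separate ``entry''/``exit'' event streams per rectangle, while the paper maintains $\Rect_V(l)$ incrementally by jumping $l$ to the next threshold $|V_i|+\cdots+|V_{j(i,l)}|$ popped from the queue; these are the same computation.
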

  \begin{proof}
    \eqref{aaa}: We use the characterization of $\Rect_V(l)$ from the observation. Indices $j(i,l)$ can be computed in $\Oh(m)$ time using a sliding-window-based approach. Note that
    $$|\Rect_V(l)| \le \sum_{i=1}^m (j(i+1,l)-j(i,l)+1) = m + \sum_{i=1}^m (j(i+1,l)-j(i,l)) \le 2m.$$
    This also implies that $\Rect_V(l)$ can be computed in $\Oh(m)$ time.
    
    \eqref{bbb}: We store the current set $\Rect_V(l)$ in a data structure $S$ which is an array indexed by $i$ of lists of $\rect_V(i,j)$, ordered by $j$ in each list. Each rectangle is represented in $\Oh(1)$ space.
    
    For subsequent values of $l$ we store all pairs of the form $(|V_i|+\ldots+|V_{j(i,l)}|,i)$ such that $j(i,l)<m+1$ in a priority queue, with the minimum stored on the top. Let $(a,i)$ be the pair currently stored on the top. If $l<a$, we know that $\Rect_V(l)=\ldots=\Rect_V(a)$ but $\Rect_V(a+1) \ne \Rect_V(a)$. Thus we will increase $l$ directly to $l=a+1$. To compute the symmetric difference of the two sets, we pop from the priority queue all pairs with the first component equal to $a$. For each such pair $(a,i)$, we set $j(i,a+1)=j(i,a)+1$, removing $\rect_V(i+1,j(i,a))$ from $S$ and inserting $\rect_V(i,j(i,a+1))$ to $S$ if $j(i,a+1) \le m$. We then insert the pair $(|V_i|+\ldots+|V_{j(i,a+1)}|,i)$ to the priority queue provided that $j(i,a+1) \le m$.
   
   Each index $j(i,l)$ is incremented at most $m$ times. Hence:
   \begin{itemize}
     \item the total number of operations performed on the priority queue,
     \item the total number of lengths $l$ such that $(\Rect_V(l) \setminus \Rect_V(l-1)) \cup (\Rect_V(l-1) \setminus \Rect_V(l))$ is nonempty, and
     \item the total number of rectangles reported
   \end{itemize}
   are all bounded by $\Oh(m^2)$. By a reduction of Thorup~\cite{DBLP:journals/jacm/Thorup07}, the priority queue can be implemented in $\Oh(\ssort(m))$ space using $\Oh(\tsort(m)/m)$ time per operation.
  \end{proof}
  
  \section{Algorithm for RLE-LCAF over Small Alphabet}\label{sec:2}

\newcommand{\up}{\mathrm{up}}
\newcommand{\down}{\mathrm{down}}

\subsection{RLE-LCAF over Binary Alphabet}
First, we present a simple solution for $\sigma=2$. Our approach is based on the following known property specific to binary strings (see~\cite{PSC2009-10}).
\begin{observation}\label{obs:binary}
If  $(p,q_1),(p,q_2)\in \PP(v)$ for $q_1\le q_2$ and a binary string $v$, 
then $(p,q')\in \PP(v)$ for every integer $q'\in [q_1\dotdot q_2]$.
\end{observation}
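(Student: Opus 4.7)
The plan is to reduce the observation to a one-dimensional intermediate value argument by parametrizing factors by their length. The starting point is the classical sliding-window fact (the same principle underlying the binary jumbled index of Cicalese et al.\ cited earlier): for a binary string $v$ and any length $\ell$, the set of possible numbers of zeros in length-$\ell$ factors of $v$ is an integer interval $[\alpha_\ell,\beta_\ell]$, because sliding the length-$\ell$ window one position to the right changes the zero-count by exactly $-1$, $0$, or $+1$.

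With this in hand, I would translate the hypothesis into length coordinates: let $\ell_i = p+q_i$ for $i=1,2$, so the assumption becomes $p\in[\alpha_{\ell_1},\beta_{\ell_1}]\cap[\alpha_{\ell_2},\beta_{\ell_2}]$. The goal becomes showing $p\in[\alpha_\ell,\beta_\ell]$ for every $\ell\in[\ell_1\dotdot\ell_2]$, because then the corresponding length-$\ell$ factor witnesses $(p,\ell-p)\in\PP(v)$, and substituting $\ell=p+q'$ yields exactly $(p,q')\in\PP(v)$ for every $q'\in[q_1\dotdot q_2]$.

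The remaining step is the monotonicity of the two boundary functions in $\ell$. Every length-$(\ell+1)$ factor is obtained by appending one character to a length-$\ell$ factor, which changes the zero-count by $0$ or $1$; equivalently, every length-$(\ell+1)$ factor restricts (by deleting its last character) to a length-$\ell$ factor whose zero-count is one less or equal. From the restriction direction I get $\alpha_\ell\le \alpha_{\ell+1}$, and from the extension direction I get $\beta_\ell\le \beta_{\ell+1}$, so both $\alpha_\ell$ and $\beta_\ell$ are non-decreasing in $\ell$. Combined with $\alpha_{\ell_2}\le p$ and $\beta_{\ell_1}\ge p$, this gives $\alpha_\ell\le p\le \beta_\ell$ throughout $[\ell_1\dotdot\ell_2]$, completing the argument.

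There is no real obstacle here — the only thing to be slightly careful about is not to confuse directions when proving the monotonicity of $\alpha$ and $\beta$ (one uses restriction, the other extension), and to verify that the boundary conditions $\alpha_{\ell_2}\le p$ and $\beta_{\ell_1}\ge p$ at the two endpoints $\ell_1,\ell_2$ are exactly the right ones after the monotonicity step, which they are.
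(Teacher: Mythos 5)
Your argument is correct and is essentially the standard one: the paper states this observation without proof, citing Cicalese et al., whose key fact is precisely the fixed-length interval property you start from, and your derivation via monotonicity of the boundary functions $\alpha_\ell$ and $\beta_\ell$ is the usual way to obtain the paper's formulation (one coordinate fixed, length varying) from it. All steps, including the discrete intermediate-value argument for the interval property and the two directions of monotonicity, check out.
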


In other words, the set $\PP(v)$ is an orthogonally convex subset of $\integ^2$.
Let us define $\up_v(p)=\max \{q : (p,q)\in \PP(v)\}$ and $\down_v(p)=\min\{q : (p,q)\in \PP(v)\}$
to be functions representing the upper and lower boundary of this region, respectively.
Note that due to \cref{obs:binary} (with the two coordinates interchanged)
and the fact that each point in $\PP(v)$ is dominated by $\Par(v)$, both these functions are non-decreasing.

\begin{lemma}\label{lem:updown}
Let $V$ be the RLE representation of a binary string $v$.
If the size of $V$ is $m$,
then the functions $\up_v$ and $\down_v$ are piecewise constant with $\Oh(m^2)$ pieces.
Such representations can be generated in the left-to-right
order in $\Oh(m\cdot \tsort(m))$ time using $\Oh(\ssort(m))$ space.
\end{lemma}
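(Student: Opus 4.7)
The plan is to read $\up_v$ and $\down_v$ off the rectangle family $\Rect_V$ introduced in Section~\ref{sec:red}, by sweeping the first coordinate $p$ from left to right and re-using the pointer machinery from the proof of Lemma~\ref{lem:rect-l}\eqref{bbb}. By Observation~\ref{obs:rect_simple}, a point $(p,q)$ lies in $\PP(v)$ iff it belongs to some $R=[a_1,b_1]\times[a_2,b_2]\in\Rect_V$; writing $\Rect_V^{(1)}(p):=\{R\in\Rect_V : a_1(R)\le p\le b_1(R)\}$ for the active set at $p$, we have $\up_v(p)=\max\{b_2(R):R\in \Rect_V^{(1)}(p)\}$ and $\down_v(p)=\min\{a_2(R):R\in \Rect_V^{(1)}(p)\}$. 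Both functions can therefore change only at $p=a_1(R)$ or $p=b_1(R)+1$ for some $R\in\Rect_V$, which yields the claimed bound of $\Oh(|\Rect_V|)=\Oh(m^2)$ pieces.

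To stream these pieces in order of $p$, I mirror the $l$-sweep of Lemma~\ref{lem:rect-l}\eqref{bbb} with $l$ replaced by the first coordinate. For each $i$ I maintain pointers $j^\downarrow(i)=\min\{j : b_1(\rect_V(i,j))\ge p\}$ and $j^\uparrow(i)=\max\{j : a_1(\rect_V(i,j))\le p\}$, both non-decreasing in $p$, so that the active rectangles of the form $\rect_V(i,\cdot)$ are exactly those with $j\in[j^\downarrow(i),j^\uparrow(i)]$. The next $p$ triggering a pointer update is extracted from a priority queue keyed by $b_1(\rect_V(i,j^\downarrow(i)))+1$ (end of a rectangle) and $a_1(\rect_V(i,j^\uparrow(i)+1))$ (start of the next one). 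Since each pointer advances at most $m$ times per $i$, the queue serves $\Oh(m^2)$ events in total, and Thorup's reduction~\cite{DBLP:journals/jacm/Thorup07} keeps the per-operation cost at $\Oh(\tsort(m)/m)$ within $\Oh(\ssort(m))$ space.

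Reading off $\up_v(p)$ is cheap: since $\up_v$ is non-decreasing and any rectangle $R$ that has already been inserted witnesses $\up_v(a_1(R))\ge b_2(R)$, the current value $\up_v(p)$ equals the running maximum of $b_2(R)$ over rectangles inserted so far, updated in $\Oh(1)$ per insertion event. The main obstacle is $\down_v$: the rectangle attaining the current minimum of $a_2$ may have already left $\Rect_V^{(1)}(p)$, so the symmetric running-minimum trick fails in a left-to-right sweep. My plan is to store the $a_2$-values of active rectangles in a second Thorup priority queue with lazy deletion: removals become tombstones, and a min-query discards tombstones until a live entry surfaces. Each rectangle contributes one insertion and one tombstone, so the total work is $\Oh(m^2)\cdot \Oh(\tsort(m)/m)=\Oh(m\cdot\tsort(m))$ time in $\Oh(\ssort(m))$ space, and emitting the new values of $\up_v$ and $\down_v$ whenever either of them changes produces both piecewise-constant representations in left-to-right order.
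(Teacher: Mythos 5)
Your bound of $\Oh(m^2)$ pieces and your treatment of $\up_v$ are correct and essentially match the paper: the running maximum of $b_2(R)$ over rectangles whose left edge has already been passed is exactly the paper's sweep over top-left corners, justified by the monotonicity of $\up_v$. The genuine gap is in your handling of $\down_v$. Lazy deletion does not preserve the space bound: a tombstoned entry is physically removed only when it surfaces as the minimum, so an entry with a large $a_2$-value that is inserted and later deleted can sit in the queue until the very end. Over the whole sweep there are $\Theta(m^2)$ insertions, so the second priority queue can accumulate $\Theta(m^2)$ entries even though few are live; Thorup's reduction then occupies $\Oh(\ssort(m^2))=\Oh(m^2)$ space, and the per-operation cost $\Oh(\tsort(m)/m)$ you charge presumes a queue of size $\Oh(m)$. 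This breaks the claimed $\Oh(\ssort(m))$ space bound. A repair via \emph{eager} deletion would additionally require (i) showing that the number of simultaneously active rectangles is $\Oh(m)$ — true, by the telescoping argument of Lemma~\ref{lem:rect-l}\eqref{aaa} transplanted to the first coordinate, but absent from your write-up — and (ii) a priority queue supporting arbitrary deletions within the same bounds, which is more than the paper ever asks of Thorup's structure.

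The paper sidesteps the active-set minimum entirely by using monotonicity a second time: since $\down_v$ is non-decreasing, its staircase can equally be traced by sweeping the \emph{second} coordinate from bottom to top over the bottom-right corners of the rectangles and maintaining a running maximum of their first coordinates — the exact mirror image of the $\up_v$ computation from top-left corners. This eliminates deletions altogether, keeps the priority queue at exactly $m$ entries (one frontier rectangle per starting index $i$), and yields the stated $\Oh(m\cdot\tsort(m))$ time and $\Oh(\ssort(m))$ space. You should replace your tombstone queue with this second corner sweep (and then merge the two staircases, both of which come out in left-to-right order).
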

\begin{proof}
The function $\up_v$ is the upper envelope of $\Oh(m^2)$ rectangles in $\Rect_V$.
To determine $\up_v$, we process the top-left corners of the rectangles
in the left-to-right order. During this sweep, the value $\up_v$ is the largest second coordinate
of the already scanned points.
In a similar way, we determine $\down_v$ as the bottom envelope of the rectangles,
i.e., processing their bottom-right corners in the bottom-to-top order.

A simple implementation involves sorting the coordinates of all $\Oh(m^2)$ vectors $\P(V_i\cdots V_j)$. However, this would require $\Oh(m \ssort(m))$ space. Therefore, instead we maintain a priority queue of size $m$ which contains, for each index $i$, the coordinates of the top-left corner of the rectangle $\rect_V(i,j)$ such that $j$ is the first unprocessed index for this value of $i$. Upon the removal of $\rect_V(i,j)$ from the queue, we insert $\rect_V(i,j+1)$ provided that $j < m$. Finally, just as in the proof of Lemma~\ref{lem:rect-l}\eqref{bbb}, we use the reduction of Thorup~\cite{DBLP:journals/jacm/Thorup07} to implement the priority queue in $\Oh(\ssort(m))$ space using $\Oh(\tsort(m)/m)$ time per operation.
\end{proof}

\begin{proposition}\label{prp:rlcaf2}
  The RLE-LCAF problem for $\sigma=2$ can be solved in $\Oh(m\tsort(m))$ time and $\Oh(\ssort(m))$ space.
  With the state-of-the-art sorting algorithms, the running time is $\Oh(m^2 \sqrt{\log \log m})$ in expectation
  or $\Oh(m^2 \log \log m)$ deterministic, both with $\Oh(m)$ space.
\end{proposition}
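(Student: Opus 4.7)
The plan is to exploit the orthogonal convexity of $\PP(s)$ and $\PP(t)$ via the functions $\up_v,\down_v$ supplied by \cref{lem:updown}. By \cref{obs:binary}, for every $p$ we have $(p,q)\in \PP(s)\cap \PP(t)$ if and only if $\max(\down_s(p),\down_t(p)) \le q \le \min(\up_s(p),\up_t(p))$; whenever this interval is non-empty, the longest common Abelian factor with first coordinate $p$ has $\ell_1$-norm $p+\min(\up_s(p),\up_t(p))$. Thus the answer is the maximum of $p+\min(\up_s(p),\up_t(p))$ taken over all $p$ for which the above interval is non-empty, and the problem reduces to a simultaneous sweep of the four functions.

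To evaluate this maximum within the stated budget, I would invoke \cref{lem:updown} four times --- once for each of $\up_s$, $\down_s$, $\up_t$, $\down_t$ --- and run all four generators concurrently, at every step advancing the generator whose next breakpoint has the smallest value of $p$. Between two consecutive breakpoints all four functions are constant, so both the non-emptiness test and the value $\min(\up_s(p),\up_t(p))$ remain fixed over the piece, and $p+\min(\up_s(p),\up_t(p))$ over that piece is maximized at its right endpoint. A single scalar running maximum is updated at each of the $\Oh(m^2)$ pieces, and returned at the end of the sweep.

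The main potential obstacle is the space bound: the combined piecewise-constant representation has $\Theta(m^2)$ pieces in total, so we cannot afford to materialize even one of the four functions in full. The streaming left-to-right guarantee of \cref{lem:updown} is precisely what circumvents this issue, since at any moment only the currently active piece of each of the four functions needs to be kept alive. For the complexity, each of the four invocations costs $\Oh(m\,\tsort(m))$ time and $\Oh(\ssort(m))$ space, the merge contributes an additional $\Oh(m^2)$ time (absorbed by $\Oh(m\,\tsort(m))$), and the running maximum adds only $\Oh(1)$ extra space. Substituting the bounds of Han--Thorup~\cite{DBLP:conf/focs/HanT02} and Han~\cite{DBLP:journals/jal/Han04} yields the claimed $\Oh(m^2\sqrt{\log\log m})$ expected and $\Oh(m^2\log\log m)$ deterministic running times within $\Oh(m)$ space.
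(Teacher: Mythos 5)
Your proposal is correct and follows essentially the same route as the paper: compute $\up_s,\down_s,\up_t,\down_t$ via \cref{lem:updown}, test non-emptiness of the intersection per first coordinate $p$, take $q=\min(\up_s(p),\up_t(p))$, and restrict attention to right endpoints of steps, with the sorting bounds of \cite{DBLP:conf/focs/HanT02,DBLP:journals/jal/Han04} giving the final complexities. Your explicit remark that the four representations must be consumed in a streaming fashion (since they have $\Theta(m^2)$ pieces but only $\Oh(m)$ space is available) is a point the paper leaves implicit, and it is a correct and worthwhile clarification.
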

\begin{proof}
We apply \cref{lem:updown} to determine the staircase functions $\up_s$, $\down_s$, $\up_t$, and $\down_t$.
Next, we scan their representations to find out for each $p$ if there is a point $(p,q)\in \PP(s)\cap \PP(t)$,
i.e., whether $\up_s(p)\ge \down_t(p)$ and $\up_t(p)\ge \down_s(p)$. If so, to maximize
the $\ell_1$ norm, we take $q=\min(\up_s(p),\up_t(p))$. We also observe that
it suffices to consider values $p$ which are right endpoints of a step in at least one of the considered functions.
The running time of this post-processing is $\Oh(m^2)$ and the extra space consumption is constant.

The overall time complexity is dominated by sorting integers in \cref{lem:updown}.
The fastest randomized~\cite{DBLP:conf/focs/HanT02} and deterministic~\cite{DBLP:journals/jal/Han04} sorting algorithms yield the announced
bounds on the running time.
\end{proof}

\subsection{Space-Efficient Reduction to Problem~\ref{prob:rect}}
\noindent
We use a more sophisticated reduction than the one of Lemma~\ref{lem:red} that leads to more space-efficient algorithms.
\begin{lemma}\label{lem:red2}
In $\Oh(m \tsort(m))$ time and $\Oh(\ssort(m))$ space,
the RLE-LCAF problem can be reduced to $\Oh(m)$ instances of Problem~\ref{prob:rect} with $N=\Oh(m)$, $d=\sigma$,
$\R_1\sub \Rect_S$, and $\R_2\sub \Rect_T$.
Each rectangle is represented in constant space as $R=\rect_V(i,j)$ for $V\in \{S,T\}$.
\end{lemma}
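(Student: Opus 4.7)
The plan is to sweep over the candidate factor length $l$ using Lemma~\ref{lem:rect-l} and to invoke Problem~\ref{prob:rect} on the snapshots $\Rect_S(l)$ and $\Rect_T(l)$ via a binary-search scheme.

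First, I would apply Lemma~\ref{lem:rect-l}\eqref{bbb} for both $V = S$ and $V = T$, producing in $\Oh(m\tsort(m))$ time and $\Oh(\ssort(m))$ space a sorted stream of insertion/deletion events describing how $\Rect_V(l)$ evolves as $l$ grows. At any queried $l$ the snapshot $\Rect_V(l)$ can be reconstructed in $\Oh(m)$ time and has $\Oh(m)$ elements (Lemma~\ref{lem:rect-l}\eqref{aaa}); each rectangle is stored in $\Oh(1)$ space as $\rect_V(i,j)$, matching the representation required by the statement.

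Second, by Lemma~\ref{lem:red} the LCAF length $l^*$ equals the maximum $\ell_1$-norm of a common point of some $R_1 \in \Rect_S$ and $R_2 \in \Rect_T$. Since a point of norm $l$ can lie only in rectangles $R$ with $l \in \I(R)$, at $l = l^*$ the instance of Problem~\ref{prob:rect} with $\R_1 = \Rect_S(l^*)$ and $\R_2 = \Rect_T(l^*)$ returns a point of norm exactly $l^*$; conversely, for any $l > l^*$ the returned norm (if the answer exists at all) is strictly less than $l$, because every returned point is a common Abelian factor whose length is bounded by $l^*$.

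Third, the reduction itself is a binary search for $l^*$ over $\{0,\ldots,n\}$. Under the standard word-RAM assumption $\log n = \Oh(m)$, this yields $\Oh(m)$ instances of Problem~\ref{prob:rect}, each fed with $\R_1 = \Rect_S(l) \sub \Rect_S$ and $\R_2 = \Rect_T(l) \sub \Rect_T$ of size $\Oh(m)$; between queries the preprocessed event stream reconstructs the active $\Rect_V(l)$ in $\Oh(m)$ time. The maximum norm returned across all queries is the LCAF answer, and the total space stays $\Oh(\ssort(m))$ since each query is discarded after use.

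The principal obstacle is that the predicate ``Problem~\ref{prob:rect} at $l$ returns norm $\ge l$'' is not manifestly monotone in $l$: rectangles enter $\Rect_V(l)$ as $l$ crosses their minimum $\ell_1$-norm and leave as $l$ crosses their maximum, so $\Rect_V(l)$ is not a monotone family. The binary search must therefore thread through the event stream, examining a constant number of additional events near each midpoint to avoid skipping the unique $l = l^*$ at which the returned norm is exactly $l$; combined with the global upper bound ``returned norm $\le l^*$'', the refined search converges to $l^*$ within the $\Oh(m)$-query budget, completing the reduction.
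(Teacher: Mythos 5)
Your reduction hinges on a binary search for the optimal length $l^*$, and this is a genuine gap that your own last paragraph identifies but does not repair. The predicate you query, namely ``the instance built from $\Rect_S(l)$ and $\Rect_T(l)$ returns a point of norm at least $l$,'' holds at $l=l^*$ and fails for every $l>l^*$, but for $l<l^*$ it can fail arbitrarily: a witness pair $(R_1,R_2)$ for the optimum belongs to $\Rect_S(l)\times\Rect_T(l)$ only when $l\in \I(R_1)\cap\I(R_2)$. So the set of lengths at which the predicate holds is an essentially arbitrary subset of $\{0,\ldots,l^*\}$ containing $l^*$; in the worst case it is close to the singleton $\{l^*\}$, and locating the maximum of such a set with a membership oracle requires a number of queries linear in the range, not logarithmic. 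The proposed patch of ``examining a constant number of additional events near each midpoint'' does not help, because $l^*$ need not lie near any midpoint and no answer of the oracle points toward it. The monotone predicate that \emph{would} support binary search --- ``there exists a common Abelian factor of length at least $l$'' --- is not the one your instances decide: deciding it at a single $l$ would require all rectangles $R$ with $\max \I(R)\ge l$, of which there can be $\Theta(m^2)$, breaking the $N=\Oh(m)$ bound. A secondary issue is that the $\Oh(m^2)$-event stream of Lemma~\ref{lem:rect-l}\eqref{bbb} cannot be stored for random access within $\Oh(\ssort(m))$ space, though this alone would be repairable via Lemma~\ref{lem:rect-l}\eqref{aaa}, which rebuilds any single snapshot in $\Oh(m)$ time.

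The paper's proof avoids searching altogether and instead uses a linear sweep with amortized batching. It enumerates, via Lemma~\ref{lem:rect-l}\eqref{bbb}, all values of $l$ at which $\Rect_S(l)$ or $\Rect_T(l)$ changes, and maintains supersets $\R_1$ and $\R_2$ with $\Rect_S(l)\sub\R_1\sub\Rect_S$ and $\Rect_T(l)\sub\R_2\sub\Rect_T$ by propagating every insertion (but no deletion) into them. After every $m$ insertions it emits $(\R_1,\R_2)$ as an instance of Problem~\ref{prob:rect} and resets $\R_1:=\Rect_S(l)$, $\R_2:=\Rect_T(l)$. Since the total number of insertions is $\Oh(m^2)$, this yields $\Oh(m)$ instances, each of size $\Oh(m)$, and \emph{every} length $l$ --- in particular $l^*$ --- has its snapshot contained in some emitted instance, so the maximum over all instances is the correct answer. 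The idea you are missing is precisely this amortization: one does not need to avoid visiting all relevant lengths, only to avoid paying for a separate $\Oh(m)$-sized instance at each of them.
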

\begin{proof}
For subsequent integers $l=0$ to $\min(|s|,|t|)$ we maintain
families $\Rect_S(l)\sub \Rect_S$ and $\Rect_T(l)\sub \Rect_T$ using Lemma~\ref{lem:rect-l}\eqref{bbb}. We only consider the values of $l$ for which $\Rect_S(l)$ or $\Rect_T(l)$ changes comparing to $l-1$.

At the same time, we maintain sets $\R_1$ and $\R_2$ with $\Rect_S(l)\sub \R_1 \sub \Rect_S$ and $\Rect_T(l)\sub \R_2 \sub \Rect_T$.
To make sure that the invariant is satisfied, every insertion to $\Rect_S(l)$ and $\Rect_T(l)$ is performed also in $R_1$ and $R_2$, respectively.
Moreover, after every $m$ insertions (including the final insertion), we make an instance of \cref{prob:rect} out of $\R_1$ and $\R_2$,
and we set $\R_1 := \Rect_S(l)$ and $\R_2 := \Rect_T(l)$.
This way, the size of these families is bounded by $\Oh(m)$. Moreover, the number of insertions is $\Oh(m^2)$, so the number of instances is $\Oh(m)$.

Finally, note that the conclusion follows from the fact that for every $l$, including the length of the sought LCAF,
 we output an instance with $\Rect_S(l)\sub \R_1$ and $\Rect_T(l)\sub \R_2$.
\end{proof}

We say that an instance of Problem~\ref{prob:rect} is \emph{normalized} if all coordinates of the corners of the rectangles are of magnitude $\Oh(N)$.
Any instance of Problem~\ref{prob:rect} can be transformed into a normalized one in $\Oh(\tsort(N))$ time by renumbering each coordinate of the rectangles' corners
separately preserving their relative order.
In all the algorithms below we normalize the instance as a first step.
However, in order to compute the result for the original instance, which need not be normalized, when comparing
the $\ell_1$ norms of intersection points found, we need to transform the renumbered values of all coordinates back to the original ones.

Let us denote the dimensions by $x_1,\ldots,x_d$.
We define the \emph{dimensions of a rectangle} $R$ as the indices $i$ such that the projection of $R$ to $x_i$ is not a singleton.
We also say that rectangle is a \emph{$ij$-rectangle} (for $1 \le i < j \le d$) if its dimensions form a subset of $\{i,j\}$.

\subsection{Solution to Problem~\ref{prob:rect} in 2D}
\noindent
\begin{lemma}\label{lem:2d}
  Problem~\ref{prob:rect} for $d=2$ can be solved in $\Oh(N \log N)$ time and $\Oh(N)$ space.
\end{lemma}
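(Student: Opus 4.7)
My plan is to solve the 2D instance of Problem~\ref{prob:rect} by a right-to-left line sweep along the $x$-axis, maintaining for each family two $\Oh(\log N)$-per-operation segment trees over the $y$-axis.

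First I would normalize all coordinates to integers in $\{1,\ldots,\Oh(N)\}$ by sorting, in $\Oh(N \log N)$ time and $\Oh(N)$ space, keeping auxiliary tables to recover the original values for computing $\ell_1$ norms. For each rectangle $R = [a,b]\times[c,d]$ I generate an insertion event at $x = b$ and a deletion event at $x = a-1$, and process events in decreasing order of $x$ with careful tie-breaking so that each rectangle is ``active'' precisely while the sweep line has $x \in [a,b]$. For each family $\R_i$ ($i \in \{1,2\}$) I maintain a \emph{stabbing tree}, supporting range increments/decrements on $y$-intervals $[c,d]$ and point queries ``does any active rectangle cover $y_0$?'', together with a \emph{top-max tree}, supporting point increments/decrements at $d$ and range-max queries returning the largest $d'$ in a given $y$-range that is the top of some active rectangle; both are standard segment-tree constructions.

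The key observation is that when $R = [a,b]\times[c,d] \in \R_1$ is inserted at $x = b$, every active $R' = [a',b']\times[c',d'] \in \R_2$ already satisfies $b \in [a',b']$, so the pair intersects iff the $y$-intervals $[c,d]$ and $[c',d']$ overlap, and the maximal-$\ell_1$ intersection point equals $(b, \min(d,d'))$. I would handle the two subcases with two queries on $\R_2$'s structures: if $d' \ge d$, the point is $(b,d)$ and the only extra requirement $c' \le d$ amounts to a stabbing query at $y = d$, with candidate value $b + d$; otherwise $d' < d$, the point is $(b,d')$ with $d' \in [c,d-1]$, and I extract the largest such $d'$ via a range-max query over $[c,d-1]$. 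A symmetric pair of queries handles insertions from $\R_2$. Afterwards $R$ is inserted into its own family's data structures and the running best is updated.

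Every intersecting pair $(R_1,R_2)$ with $b_1 \le b_2$ is examined when $R_1$ is inserted, because at that moment $R_2$ is active ($a_2 \le b_1 \le b_2$) and the two-case split captures the optimum; the case $b_2 \le b_1$ is handled symmetrically. With $\Oh(N)$ events and a constant number of $\Oh(\log N)$-time operations per event, the total time is $\Oh(N \log N)$ and the space is $\Oh(N)$. The main obstacle I expect is getting the tie-breaking right for simultaneous insertions from $\R_1$ and $\R_2$ at the same $x$ (so that pairs with $b_1 = b_2$ are not missed), and double-checking that the two subcases $d' \ge d$ and $d' < d$ indeed cover every intersecting configuration from both sides of the sweep; both are routine but easy to slip on.
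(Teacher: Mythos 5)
Your proposal is correct and follows essentially the same approach as the paper: a plane sweep along the $x$-axis with $\Oh(\log N)$-time structures over the $y$-coordinate, where your case split on $\min(d,d')$ reproduces the paper's decomposition into ``top-right corner of one rectangle lies in the other'' versus ``a vertical edge crosses a horizontal edge,'' and your stabbing and top-max segment trees play the role of the paper's range trees. The tie-breaking and normalization details you flag are handled the same way there, so no gap remains.
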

\begin{proof}
  The maximal intersection point of two rectangles in 2D is either the top right corner of one of them
  or an intersection point of a vertical edge of one rectangle with a horizontal edge of the other rectangle.

  To handle the first case, we generate all top right corners of rectangles from $\R_q$ and find the one
  with the maximal $\ell_1$ norm that is contained in a rectangle from $\R_{3-q}$, for $q=1,2$.
  This requires a classical line sweep algorithm.
  Say that the sweep goes from left to right.
  The events in the sweep are the points constructed from $\R_q$ and the beginnings and endings of rectangles from $\R_{3-q}$.
  The events can be sorted in $\Oh(N)$ time thanks to the fact that the instance is normalized.
  The vertical intervals of all the rectangles that are currently intersected by the broom are
  stored in a range tree~\cite{DBLP:journals/ipl/Bentley79}.
  When a point is encountered, in $\Oh(\log N)$ time we can check if it is contained in one of the intervals
  using the range tree.
  This gives $\Oh(N \log N)$ time.

  The second case is handled using a similar line sweep.
  This time the broom stores the set of horizontal segments from $\R_q$ that it currently intersects, in a range tree ordered by their horizontal component.
  When a vertical segment resulting from $\R_{3-q}$ is encountered, the range tree can be queried in $\Oh(\log N)$ for the topmost
  horizontal segment that intersects it.
  This completes the $\Oh(N \log N)$-time algorithm.
\end{proof}

\subsection{Solution to Problem~\ref{prob:rect} in 3D}
\noindent
\begin{lemma}\label{lem:3d}
  Problem~\ref{prob:rect} for $d=3$ can be solved in $\Oh(N \log^2 N)$ time and $\Oh(N \log N)$ space.
\end{lemma}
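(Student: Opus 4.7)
The plan is to classify each rectangle in $\R_1\cup\R_2$ by its \emph{singleton dimension}, i.e., an index $k\in\{1,2,3\}$ for which the projection to the $x_k$-axis is a singleton (if multiple such $k$ exist, we pick one arbitrarily). This gives three rectangle types and thus nine combinations of $(\mathrm{type}(R_1),\mathrm{type}(R_2))$ with $R_1\in\R_1$ and $R_2\in\R_2$; I would solve each combination's restricted problem independently and then return the candidate intersection point of largest $\ell_1$-norm.

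For the three combinations in which $R_1$ and $R_2$ share the same singleton dimension $k$, both rectangles lie in parallel planes $x_k=\mathrm{const}$, so they can intersect only when their $x_k$-values agree. I would group rectangles by this value and, for each value appearing in both $\R_1$ and $\R_2$, project to the other two coordinates and invoke \cref{lem:2d}. Since the total size of all 2D subinstances is $\Oh(N)$, these three combinations cost $\Oh(N\log N)$ time and $\Oh(N)$ space.

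For the six combinations with distinct singleton dimensions, which are symmetric under axis relabeling, take WLOG $R_1$ with singleton $x_1$ and $R_2$ with singleton $x_2$. Any intersection of such a pair is a segment at $(R_1.x_1,R_2.x_2)$ parallel to $x_3$, whose $\ell_1$-maximizing endpoint has $x_3=\min(R_1.x_3^{\max},R_2.x_3^{\max})$ provided the $x_3$-intervals overlap. I would sweep along the $x_1$-axis: each $R_1$ triggers a point event at its $x_1$-value, while each $R_2$ is inserted into a dynamic structure at the start of its $x_1$-interval and deleted at its end. At every $R_1$-event, the structure is queried for an active $R_2$ whose $x_2$-value lies in $R_1$'s $x_2$-interval and whose $x_3$-interval overlaps $R_1$'s, maximizing the $\ell_1$-objective above.

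The hard part is supporting each insertion, deletion, and query in $\Oh(\log^2 N)$ time within $\Oh(N\log N)$ space. I plan to split each query into two subqueries depending on whether $R_2.x_3^{\max}\le R_1.x_3^{\max}$. In the first subcase the objective becomes $R_1.x_1+R_2.x_2+R_2.x_3^{\max}$ subject to the 2D rectangular constraint $R_2.x_2\in[R_1.x_2^{\min},R_1.x_2^{\max}]$ and $R_2.x_3^{\max}\in[R_1.x_3^{\min},R_1.x_3^{\max}]$, which a dynamic two-level range tree answers in $\Oh(\log^2 N)$. In the second subcase the objective is the $R_2$-independent value $R_1.x_1+R_1.x_3^{\max}+R_2.x_2$, and the constraints become $R_2.x_2\in[R_1.x_2^{\min},R_1.x_2^{\max}]$ together with $R_1.x_3^{\max}\in[R_2.x_3^{\min},R_2.x_3^{\max})$, i.e., a stabbing query in $x_3$ combined with a range query in $x_2$; an interval tree on $x_3$-intervals carrying a secondary range-max tree on $x_2$ at each node handles this in $\Oh(\log^2 N)$ as well. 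Summed over $\Oh(N)$ sweep events and $\Oh(1)$ type combinations, this yields the claimed $\Oh(N\log^2 N)$ time and $\Oh(N\log N)$ space bounds.
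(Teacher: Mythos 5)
Your proposal is correct and follows essentially the same route as the paper: split pairs according to whether the two rectangles are confined to a common axis-aligned plane (reducing to \cref{lem:2d} on groups of total size $\Oh(N)$) and otherwise run a sweep along one singleton axis with a two-level range-max tree over the active rectangles. The one place where you genuinely diverge is the treatment of the $\min$ over the shared free coordinate: the paper assumes WLOG that the minimum is attained by the broom-stored rectangle and covers the opposite case by re-running the sweep under the reversed axis permutation (with the roles of $\R_1$ and $\R_2$ swapped), so a single static 2D range tree storing points $(a,b_2)$ with weight $a+b_2$ suffices; you instead fix which rectangle is the event and which is stored, so you must resolve both subcases inside one sweep, which forces your second data structure. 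That second structure is the only spot needing care: a plain interval tree with a secondary range-max tree at each node gives $\Oh(\log N)$ stabbed nodes times $\Oh(\log^2 N)$ per node, i.e.\ $\Oh(\log^3 N)$ per query, overshooting the bound. You should either store the $x_3$-intervals in a segment tree (so that at each of the $\Oh(\log N)$ canonical nodes every stored interval is stabbed and only a 1D max-query on $x_2$ remains), or exploit that the stabbing condition at each interval-tree node is one-sided and use a priority-search-tree-style secondary structure; either fix restores $\Oh(\log^2 N)$ time within $\Oh(N\log N)$ space. Also, as the paper notes, the skeletons of these structures should be built statically from the (normalized, known in advance) coordinate set so that insertions and deletions are mere activations. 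With these small repairs your argument matches the claimed bounds.
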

\begin{proof}
  The maximum intersection point comes from two rectangles with the same dimensions or with different dimensions.
  The former reduces to the 2D version of the problem.
  Indeed, we iterate over all subsets of two dimensions; let us assume that the common dimensions of the rectangles are $1$ and $2$.
  We group such rectangles from $\R_1$ and $\R_2$ according to $x_3$.
  For each group, we find the maximum intersection of two rectangles using Lemma~\ref{lem:2d}.
  (The rectangles for each group can be normalized in $\Oh(N)$ time, for all the groups together.)
  This takes $\Oh(N \log N)$ time and $\Oh(N)$ space.
  
  Now let us consider the pairs of rectangles from $\R_1$ and $\R_2$ that have just one dimension in common.
  We consider all permutations of the set of components $\{x_1,x_2,x_3\}$.
  For a given permutation, we want to check for the maximal intersection point of a $12$-rectangle from $\R_q$ 
  and a $23$-rectangle from $\R_{3-q}$, for $q=1,2$.
  Let $R_1 \in \R_q$ and $R_2 \in \R_{3-q}$ be two such rectangles.
  The maximal intersection point of $R_1$ and $R_2$, if exists, has the first component equal to the first component of $R_2$,
  the third component equal to the third component of $R_1$, and the second component equal to the minimum
  of the maximal second components of $R_1$ and $R_2$.
  Without the loss of generality, we will assume that the second component is equal to the maximal second component of the rectangle from $\R_{3-q}$.
  (The opposite case will be covered when considering the components in order $x_3,x_2,x_1$.)

  The algorithm uses a plane sweep along the $x_3$ axis.
  The broom stores the rectangles from $\R_{3-q}$ that intersect with the plane.
  For such a rectangle $\{a\} \times [b_1,b_2] \times [c_1,c_2]$, we store a 2D point $(a,b_2)$
  with the weight $a+b_2$.
  When the broom encounters a rectangle $[a'_1,a'_2] \times [b'_1,b'_2] \times \{c'\}$ from $\R_q$,
  it suffices to find the point stored in the broom in the range $[a'_1,a'_2] \times [b'_1,b'_2]$
  with the maximal weight and add $c'$ to this weight.
  
  The broom can be implemented as a 2-dimensional range tree; the structure of this tree can be static,
  since the set of all potential points that are to be stored in it is known in advance.
  A single operation on the range tree costs $\Oh(\log^2 N)$ time and the total space is $\Oh(N \log N)$.
  This yields the complexity of the algorithm.
\end{proof}

\subsection{Solution to Problem~\ref{prob:rect} in 4D}
\noindent
This time we start with an auxiliary data structure.
\begin{lemma}[Interval Stabbing-Max~\cite{DBLP:journals/siamcomp/AgarwalAKMTY12}]\label{lem:intervals}
A collection of $p$ weighted intervals on a line can be stored in $\Oh(p)$ space subject
to the following operations:
  \begin{itemize}
    \item inserting a weighted interval,
    \item deleting a weighted interval, and
    \item finding an interval with the maximum weight that contains a given point,
  \end{itemize}
  each implemented in amortized $\Oh(\log p)$ time.
\end{lemma}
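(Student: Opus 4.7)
The plan is to build a dynamic interval tree with augmented secondary structures, following the standard recipe for stabbing problems with priorities. First I would maintain a weight-balanced binary search tree $T$ over the endpoints of the intervals currently in the collection (a BB[$\alpha$]-tree supports the required insertions and deletions of endpoints with amortized $\Oh(\log p)$ restructuring). Each interval is assigned to the unique highest node $v$ of $T$ whose key lies inside it; then every root-to-leaf path in $T$ passes through $\Oh(\log p)$ such nodes, and every interval stabbed by a query point $q$ is assigned to exactly one node on the path from the root of $T$ to the leaf representing $q$.

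Second, at each node $v$ with key $k_v$ I would keep the intervals assigned to $v$ in two secondary priority search trees, one keyed by left endpoints and one by right endpoints, both augmented with weights as priorities. To answer a stabbing-max query for a point $q$, I would walk from the root of $T$ to the leaf containing $q$; at each node $v$ visited, the intervals assigned to $v$ that contain $q$ are exactly those with left endpoint at most $q$ (when $q\le k_v$) or right endpoint at least $q$ (when $q>k_v$), so a single priority-search-tree query on the appropriate prefix/suffix returns the maximum-weight candidate. The overall maximum of these $\Oh(\log p)$ candidates is the answer. Insertions and deletions affect only the one node to which an interval is assigned, so each costs one priority-search-tree update.

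The hard part is shaving the resulting $\Oh(\log^2 p)$ bound down to the claimed amortized $\Oh(\log p)$. A naive implementation pays $\Oh(\log p)$ for each of the $\Oh(\log p)$ secondary queries on the root-to-leaf path; to collapse this we need fractional cascading (or an equivalent trick) threading the secondary structures of parent and child along the interval tree so that, after locating $q$ once near the root, each subsequent secondary structure can be probed in $\Oh(1)$ additional time. Combined with the $\Oh(\log p)$ amortized cost of rebalancing the weight-balanced base tree under insertions and deletions, this yields the stated bounds. Since working out the cascading layer dynamically is exactly the technical contribution of Agarwal et al.~\cite{DBLP:journals/siamcomp/AgarwalAKMTY12}, I would cite their construction rather than reprove it from first principles.
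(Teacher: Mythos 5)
The paper states this lemma purely as a black-box citation to Agarwal et al.~\cite{DBLP:journals/siamcomp/AgarwalAKMTY12} and offers no proof of its own, and your proposal ultimately does the same, so the two approaches coincide. Your accompanying sketch of the interval-tree construction is correct as far as it goes---it correctly reduces a stabbing query to prefix/suffix maximum queries at the $\Oh(\log p)$ nodes on a root-to-leaf path and correctly identifies that collapsing the resulting $\Oh(\log^2 p)$ bound to amortized $\Oh(\log p)$ is exactly the technical contribution of the cited work---so deferring that step to the citation is appropriate.
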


\begin{lemma}\label{lem:4d}
  Problem~\ref{prob:rect} for $d=4$  can be solved in $\Oh(N \log^3 N)$ time and $\Oh(N \log^2 N)$ space.
\end{lemma}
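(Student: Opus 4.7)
The plan is to classify pairs of rectangles by the overlap of their two range dimensions, following the scheme of \cref{lem:3d}. Every rectangle in $\integ_+^4$ is an $ij$-rectangle, so the range-dimension sets of a pair either coincide, share exactly one element, or are disjoint; I would handle these three cases separately and return the best candidate.

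If the range dimensions coincide, the two rectangles must also agree on their two singleton coordinates. Grouping by those values splits the problem into independent $2$D subproblems, each handled by \cref{lem:2d}, for $\Oh(N\log N)$ total time and $\Oh(N)$ space. If the range dimensions share exactly one element, then the two rectangles still have one singleton coordinate in common; for instance, a $12$- and a $13$-rectangle must agree on $x_4$. Grouping by this coordinate turns each group into an instance of Problem~\ref{prob:rect} in $3$D to which \cref{lem:3d} applies, yielding $\Oh(N\log^2 N)$ time and $\Oh(N\log N)$ space in aggregate.

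The main case is disjoint range dimensions, e.g.\ a $12$-rectangle $R_1=[a_1,a_2]\times[b_1,b_2]\times\{c\}\times\{d\}$ from one family paired with a $34$-rectangle $R_2=\{a'\}\times\{b'\}\times[c_1',c_2']\times[d_1',d_2']$ from the other. Their intersection, if nonempty, is the single point $(a',b',c,d)$ of weight $a'+b'+c+d$. I would sweep along $x_1$, maintaining the set of active $12$-rectangles (those whose $x_1$-interval contains the sweep coordinate). At the event corresponding to $R_2$, the sum $a'+b'$ is fixed, so I need an active $12$-rectangle whose $x_2$-interval stabs $b'$ and whose singleton $(c,d)$ lies in $[c_1',c_2']\times[d_1',d_2']$, maximizing $c+d$. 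This is a nested query: a $2$D range search over $(c,d)$ augmented with an inner Interval Stabbing-Max problem on the $x_2$-intervals.

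Concretely, I would use a static $2$D range tree on the universe of $(c,d)$ values of the $12$-rectangles (known in advance because the instance is normalized), decorating each canonical node with a dynamic Interval Stabbing-Max structure from \cref{lem:intervals} holding the $x_2$-intervals of the points in its subtree with weights $c+d$. A point lies in $\Oh(\log^2 N)$ canonical nodes, so the structure takes $\Oh(N\log^2 N)$ space, and each of the $\Oh(N)$ insertions, deletions, and queries performs $\Oh(\log^2 N)$ stabbing-max operations of $\Oh(\log N)$ amortized time each, i.e.\ $\Oh(\log^3 N)$ per operation and $\Oh(N\log^3 N)$ in total. Iterating over the three partitions of $\{1,2,3,4\}$ into two disjoint pairs and the two assignments of $\R_1,\R_2$ to the sides costs only a constant factor. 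Across the three cases Case~C dominates and gives the announced bounds. The main obstacle I anticipate is engineering the nested data structure in Case~C: building the outer range tree statically over the predeclared $(c,d)$-universe so that insertions and deletions only touch already-materialized canonical nodes, and checking that the amortized guarantee of \cref{lem:intervals} composes correctly with a sweep that intermixes $\Oh(N)$ updates and queries.
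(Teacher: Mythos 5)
Your proposal is correct and follows essentially the same route as the paper: pairs sharing a range dimension are reduced to lower-dimensional instances handled by \cref{lem:2d,lem:3d}, and the disjoint case (a $12$-rectangle against a $34$-rectangle) is solved by a sweep with a static two-dimensional range tree whose canonical nodes carry the Interval Stabbing-Max structure of \cref{lem:intervals}, giving $\Oh(\log^3 N)$ per event and $\Oh(N\log^2 N)$ space. The only (immaterial, symmetric) differences are that the paper sweeps along $x_4$ and stores the $34$-rectangles in the broom (outer tree on their singleton coordinates, inner intervals on $x_3$) rather than sweeping along $x_1$ and storing the $12$-rectangles, and that it folds your first two cases into a single reduction to 3D instances over all $3$-element subsets of dimensions.
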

\begin{proof}
  First, assume that the sought pair of rectangles has at least one dimension in common.
  We consider all 3-element subsets of the set of dimensions $\{x_1,\ldots,x_4\}$.
  For each subset $X$, we consider the rectangles from $\R_1$ and $\R_2$ that have both their dimensions in $X$.
  We group the rectangles by equal values on all the remaining dimension and, for each group, solve an instance
  of Problem~\ref{prob:rect} for $d=3$.
  The total time complexity is $\Oh(N \log^2 N)$ and the total space complexity is $\Oh(N \log N)$, due to Lemma~\ref{lem:3d}.

  Finally, let us consider the case that the sought rectangles have no dimensions in common.
  By considering all permutations of the components, we can assume that we choose a $12$-rectangle from $\R_q$
  and a $34$-rectangle from $\R_{3-q}$, for $q \in \{1,2\}$.
  This time we will use a hyperplane sweep along the $x_4$ axis.
 
  The broom stores the rectangles from $\R_{3-q}$ that intersect with the hyperplane.
  For such a rectangle $\{a\} \times \{b\} \times [c_1,c_2] \times [d_1,d_2]$, we store the line segment $\{a\} \times \{b\} \times [c_1,c_2]$
  with the weight $a+b$.
  When the broom encounters a rectangle $[a'_1,a'_2] \times [b'_1,b'_2] \times \{c'\} \times \{d'\}$ from $\R_q$,
  it suffices to find the segment stored in the broom intersecting the rectangle $[a'_1,a'_2] \times [b'_1,b'_2] \times \{c'\}$
  with the maximal weight and add $c'+d'$ to its weight.

  The broom is implemented as a 3-dimensional range tree.
  Again, the structure of the range tree can be static if computed in advance.
  The first two dimensions correspond to the $x_1$ and $x_2$ coordinates of line segments.
  The third dimension is a data structure that stores 1D weighted intervals
  that supports the operations of: inserting a weighted interval, deleting
  a weighted interval, and finding an interval with the maximum weight that contains a given value.
  For this, the data structure of Lemma~\ref{lem:intervals} can be used.

  Inserting a line segment $\{a\} \times \{b\} \times [c_1,c_2]$ with weight $a+b$ to the data structure is straightforward.
  The query for a rectangle $[a'_1,a'_2] \times [b'_1,b'_2] \times \{c'\}$ reduces to $\Oh(\log^2 N)$ queries for the point $c'$
  in the 1D data structures.

  The total size of the data structure is $\Oh(N \log^2 N)$ and each event in the sweep is processed in $\Oh(\log^3 N)$ time.
  The conclusion follows.
\end{proof}

\subsection{Solution to Problem~\ref{prob:rect} in $d$ Dimensions}
\noindent
\begin{lemma}\label{lem:kd}
  Problem~\ref{prob:rect} can be solved in $\Oh(N d^2 \log^3 N)$ time and $\Oh(N(d^2+\log^2 N))$ space.
\end{lemma}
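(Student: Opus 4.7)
The plan is to generalise Lemma~\ref{lem:4d} by invoking it on every $4$-element subset of the dimensions. For a $4$-subset $U\sub\{1,\ldots,d\}$, observe that any intersecting pair $(R_1,R_2)\in\R_1\times\R_2$ whose combined non-singleton dimensions are contained in $U$ must agree coordinate-by-coordinate on the $d-4$ dimensions outside $U$ (both are singletons there). Grouping the restricted inputs by their $(d-4)$-tuple of fixed coordinates outside $U$ and solving Problem~\ref{prob:rect} for $d=4$ on each matched group via Lemma~\ref{lem:4d} therefore produces the best intersection whose combined free dimensions fit into $U$. Because every rectangle has at most two non-singleton dimensions, every candidate intersecting pair has combined free dimensions of size at most $4$, so the union of the runs over all $\binom{d}{4}$ subsets $U$ captures every candidate.

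To keep per-rectangle multiplicity low, I would first assign each rectangle $R$ a \emph{canonical type} $\tau(R)$, a $2$-subset of $\{1,\ldots,d\}$ containing the set of truly non-singleton dimensions of $R$ (padded arbitrarily when $R$ has fewer than two free dimensions, by promoting a singleton coordinate to a degenerate interval). A rectangle with $\tau(R)=T$ is then included only in iterations over $4$-subsets $U\supseteq T$, i.e., in exactly $\binom{d-2}{2}=\Oh(d^2)$ iterations. Inside each iteration, the grouping by fixed-coordinate tuples is done by radix sort (after the coordinate normalisation described before Lemma~\ref{lem:2d}), and each resulting group is fed into Lemma~\ref{lem:4d}. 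Since the contribution of the fixed coordinates to the $\ell_1$-norm is a constant shift within each group, maximising the $4$-dimensional norm inside the group and adding the shift afterwards recovers the correct $d$-dimensional norm; the final output is the best point across all groups and all iterations.

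For the time bound, summing the input sizes of the Lemma~\ref{lem:4d} invocations gives $\sum_U N_U=\Oh(Nd^2)$, since every rectangle is used in $\Oh(d^2)$ iterations; with the $\Oh(N_U\log^3 N)$ per-invocation bound of Lemma~\ref{lem:4d}, the total time is $\Oh(Nd^2\log^3 N)$. Processing the $4$-subsets sequentially keeps only a single Lemma~\ref{lem:4d} invocation live, contributing $\Oh(N\log^2 N)$ working memory; the remaining $\Oh(Nd^2)$ budget stores the input rectangles bucketed by canonical type together with pre-sorted views of each bucket that allow constant-overhead group extraction inside the main loop.

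The principal obstacle is the bookkeeping: producing, for each of the $\Oh(d^4)$ choices of $U$, the relevant rectangles sorted by the appropriate $(d-4)$-tuple of fixed coordinates without paying an $\Oh(d\cdot N)$ radix-sort cost at every iteration. A secondary subtlety is the treatment of rectangles whose true type has fewer than two non-singleton dimensions; promoting the missing free dimensions to degenerate singleton intervals leaves the answer unchanged and is handled correctly by Lemma~\ref{lem:4d}, which already accepts rectangles with $0$, $1$, or $2$ non-singleton dimensions.
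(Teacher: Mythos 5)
Your decomposition is essentially the paper's: both arguments exploit that every rectangle has at most two non-singleton dimensions, so any intersecting pair spans at most four free coordinates; both charge each rectangle to $\Oh(d^2)$ low-dimensional instances (the paper inserts $R$ with dimension set $D_1$ into the instance for $D=D_1\cup D_2$ over all $D_2$ with $|D_2|\le 2$, which is the same count as your $\binom{d-2}{2}$ four-element supersets of a padded type); and both reduce each instance to Lemma~\ref{lem:4d} after grouping by the coordinates outside the chosen dimension set, observing that those coordinates contribute a common additive shift to the $\ell_1$-norm within a group. Your padding of degenerate types and the verification that an intersecting pair lands in a common group are fine.

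The gap is exactly the step you flag as ``the principal obstacle,'' and it is not a mere bookkeeping nuisance: it is the one place where the paper needs a non-obvious ingredient. Grouping the rectangles of an instance by their $(d-4)$-tuple of fixed coordinates via radix sort costs $\Theta(d)$ per rectangle per instance (the keys have $d-4$ digits, and even touching a rectangle's full coordinate vector once per instance already costs $\Theta(d)$), for a total of $\Theta(Nd^3)$ across the $\Oh(Nd^2)$ rectangle--instance incidences; this exceeds the claimed $\Oh(Nd^2\log^3 N)$ bound whenever $d\gg\log^3 N$, which is a relevant regime since $d=\sigma$ may be as large as $m$. The paper's fix is to represent each inserted rectangle in $\Oh(1)$ words (a pointer into the input plus its at most four relevant intervals) and to build a constant-time Longest Common Extension structure over the concatenation of all corners' coordinate vectors; this lets two coordinate vectors be compared while ignoring the $|D|=\Oh(1)$ positions of $D$ in $\Oh(1)$ time, so each group partition is obtained by comparison sorting with $\Oh(N_U\log N_U)$ total work, keeping both the $\Oh(Nd^2\log^3 N)$ time and the $\Oh(N(d^2+\log^2 N))$ space bounds. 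Without this (or an equivalent fingerprinting/hashing device), your argument establishes correctness but not the stated running time.
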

\begin{proof}
	The strategy behind our algorithm is to produce several instances of $d'$-dimensional Problem~\ref{prob:rect}
	with $d'\le 4$.
	Each instance is going to be created for a particular subset $D\sub \{1,\ldots,d\}$
	and its goal is to process pairs of rectangles $R_1\in\R_1$ and $R_2\in \R_2$ whose union of dimensions is precisely $D$.
	Thus, for each rectangle $R_1\in \R_1$ with dimensions $D_1$, we consider all sets $D_2$ with $|D_2|\le 2$ and insert $R_1$ to the instance corresponding to $D=D_1\cup D_2$.
	Rectangles $R_2\in \R_2$ are processed symmetrically.
	
	Next, we further subdivide each instance by grouping the rectangles according to the values at coordinates
	in the complement of $D$ so that the space dimension can be reduced from $d$ to $|D|$.
	For this, we would like to efficiently compare two coordinate vectors ignoring $|D|=\Oh(1)$ positions.
	A simple solution is to build a data structure for constant-time Longest Common Extension (LCE) queries (see \cite{AlgorithmsOnStrings}) for the concatenation of the coordinate vectors
	of all the rectangle corners. Constructing it takes linear time and space (with respect to the input size, which is $\Theta(Nd)$).
	
	Each rectangle $R$ is inserted to $\Oh(d^2)$ instances of dimension at most 4, so the overall running time to solve these instances is $\Oh(Nd^2 \log^3 N)$.
	The space complexity is $\Oh(Nd^2)$ for storing the input of the instances and $\Oh(N\log^2 N)$ for solving them one by one.
\end{proof}

\begin{theorem}
  The RLE-LCAF problem can be solved:
  \begin{itemize}
    \item in $\Oh(m^2 \sqrt{\log \log m})$ time in expectation
  or $\Oh(m^2 \log \log m)$ time deterministically and $\Oh(m)$ space for $\sigma=2$,
    \item in $\Oh(m^2 \log^2 m)$ time and $\Oh(m \log m)$ space for $\sigma=3$,
    \item in $\Oh(m^2\sigma^2 \log^3 m)$ time and $\Oh(m (\sigma^2+\log^2 m))$ space for arbitrary $\sigma$.
  \end{itemize}
\end{theorem}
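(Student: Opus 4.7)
The three bullet points correspond to progressively more general cases, and I would handle them in order.

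For $\sigma=2$, the claim is exactly \cref{prp:rlcaf2}, so nothing new has to be shown. For $\sigma=3$ and for arbitrary $\sigma$, my plan is to combine the space-efficient reduction of \cref{lem:red2} with, respectively, the three-dimensional algorithm of \cref{lem:3d} and the general-dimensional algorithm of \cref{lem:kd}. Concretely, I apply \cref{lem:red2} to produce, in $\Oh(m\tsort(m))$ time and $\Oh(\ssort(m))=\Oh(m)$ space, a stream of $\Oh(m)$ instances of \cref{prob:rect} with $N=\Oh(m)$ and $d=\sigma$, solve each instance in turn, and report a common point maximizing the $\ell_1$-norm (translated back to the original, unnormalized coordinates as described after \cref{lem:red2}).

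For $\sigma=3$, each of the $\Oh(m)$ instances is solved by \cref{lem:3d} in $\Oh(N\log^2 N)=\Oh(m\log^2 m)$ time and $\Oh(N\log N)=\Oh(m\log m)$ space, giving a total running time of $\Oh(m^2\log^2 m)$. Since the instances are processed one at a time, the working space is reused and stays $\Oh(m\log m)$, which also dominates the $\Oh(m)$ space used by the reduction. For arbitrary $\sigma$, the same scheme with \cref{lem:kd} in place of \cref{lem:3d} gives $\Oh(m)\cdot \Oh(N\sigma^2\log^3 N)=\Oh(m^2\sigma^2\log^3 m)$ time and $\Oh(m(\sigma^2+\log^2 m))$ space per instance, hence in total.

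The only subtlety, and the place where I would be most careful in writing up the argument, is to confirm that the reduction of \cref{lem:red2} can be interleaved with the solver so that the $\Oh(m)$ rectangle-intersection instances are consumed on the fly rather than stored simultaneously; otherwise the space bound for $\sigma\ge 3$ could balloon by a factor of $m$. Since \cref{lem:red2} emits each instance after at most $m$ further insertions, we can solve it and then discard it before starting to assemble the next, so both the reduction's $\Oh(\ssort(m))$ working space and the solver's working space are present at any one time, yielding the claimed bounds. The correctness follows from \cref{lem:red2}: the length of the longest common Abelian factor equals $\|P\|_{\ell_1}$ for some $P$ realized in one of the output instances (namely, the one produced for the corresponding value of~$l$), so the maximum over all instances is exactly the desired answer.
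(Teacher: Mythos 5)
Your proposal is correct and follows exactly the paper's own (very terse) proof: Proposition~\ref{prp:rlcaf2} for $\sigma=2$, and the reduction of Lemma~\ref{lem:red2} combined with Lemma~\ref{lem:3d} (for $\sigma=3$) or Lemma~\ref{lem:kd} (for arbitrary $\sigma$), with the instances solved one at a time so that space is reused. Your explicit remark about interleaving the reduction with the solver is a worthwhile clarification that the paper leaves implicit, but it is not a different approach.
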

\begin{proof}
  For $\sigma=2$ we use Proposition~\ref{prp:rlcaf2}.
  For $\sigma=3$ and $\sigma>3$ we use the reduction of Lemma~\ref{lem:red2} to $\Oh(m)$ instances of Problem~\ref{prob:rect} and the algorithm of
  Lemma~\ref{lem:3d} or 
  Lemma~\ref{lem:kd}, respectively.
\end{proof}

\section{Algorithm for RLE-LCAF over Large Alphabet}\label{sec:3}
It is a known and simple fact that we can assume the alphabet is $\Sigma=\{1,\ldots,m\}$; otherwise all the letters could be renumbered in $\Oh(m \log m)$ time.

We use the interpretation of the RLE-LCAF problem as a problem of intersecting rectangles from the sets $\Rect_S$ and $\Rect_T$ (Lemma~\ref{lem:red}). Recall that they consist of rectangles of the form $\rect_X(i,j)$ for $X=S,T$, respectively. 

Recall that $\I(R)$ denotes the interval of $\ell_1$-norms of points in $R$. We say that rectangles $R_1$ and $R_2$ are \emph{compatible} if $\I(R_1) \cap \I(R_2) \ne \emptyset$. We further say that $R_2$ is \emph{max-compatible} with $R_1$ if $\max(\I(R_1)) \in \I(R_2)$. The following basic observation characterizes these notions.

\begin{observation}\label{obs:compat}
  Let $R_1$ and $R_2$ be rectangles.
  \begin{enumerate}[(a)]
    \item If $R_1 \cap R_2 \ne \emptyset$, then $R_1$ and $R_2$ are compatible.
    \item $R_1$ and $R_2$ are compatible if and only if $R_1$ is max-compatible with $R_2$ or $R_2$ is max-compatible with $R_1$.
  \end{enumerate}
\end{observation}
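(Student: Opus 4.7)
The plan is to verify the two parts directly from the definitions, since $\I(R)$ is simply a discrete interval $[\min \I(R), \max \I(R)]$ for any rectangle $R$: a rectangle $R\sub \integ_+^d$ is a Cartesian product of closed intervals (of which at most two are non-singletons), and since the $\ell_1$-norm on $\integ_+^d$ is just the sum of coordinates, the value $\|p\|_{\ell_1}$ attains every integer between the sum of the lower endpoints and the sum of the upper endpoints as $p$ ranges over $R$.

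Part (a) requires only one line: if $p\in R_1\cap R_2$, then $\|p\|_{\ell_1}\in \I(R_1)\cap \I(R_2)$, so the intersection is non-empty and $R_1,R_2$ are compatible by definition. For the ``if'' direction of part (b), observe that $\max(\I(R_1))$ belongs to $\I(R_1)$ trivially and to $\I(R_2)$ by the assumption of max-compatibility, so again $\I(R_1)\cap \I(R_2)\neq \emptyset$; the symmetric case is identical.

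For the ``only if'' direction of part (b), I would assume compatibility and fix some $x\in \I(R_1)\cap \I(R_2)$. Without loss of generality, suppose $\max(\I(R_1))\le \max(\I(R_2))$; I claim $R_2$ is max-compatible with $R_1$. Indeed, since $\I(R_2)$ is a closed interval containing $x$, it suffices to show $\min(\I(R_2))\le \max(\I(R_1))\le \max(\I(R_2))$. The right inequality is by assumption, and the left holds because $\min(\I(R_2))\le x\le \max(\I(R_1))$, the last inequality holding since $x\in \I(R_1)$. Hence $\max(\I(R_1))\in \I(R_2)$, as required. I do not anticipate any obstacle: the whole statement is a routine unpacking of the definitions combined with the elementary fact that two closed intervals on the line intersect if and only if the maximum of one lies in the other.
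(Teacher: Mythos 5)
Your proof is correct, and it matches the paper's intent: the paper states this as a ``basic observation'' with no written proof, treating both parts as immediate from the definitions. Your writeup supplies exactly the routine verification that is being elided, including the one point worth making explicit --- that $\I(R)$ is a full discrete interval because the coordinate sum over a product of integer intervals attains every value between the two extremes --- which is what the ``only if'' direction of (b) actually relies on.
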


For a rectangle $R$, by $\P(R)$ we denote the components of any point in $R$,
with `\placeholder' at the coordinates that correspond to the dimensions of $R$. Thus $\P(R)$ represents the coordinates of all points in $R$.
In similarity with the original problem, we call $\P(R)$ the \emph{Parikh vector} of the rectangle $R$. 
We call two rectangles $R_1$ and $R_2$ \emph{consistent} if $\P(R_1)$ and $\P(R_2)$ are equal for each coordinate where \placeholder does not appear in either vector.
By $\Delta(R_1,R_2)$ we denote the intervals of $R_1$ and $R_2$ that correspond to the coordinates where at least one of the Parikh vectors $\P(R_1)$, $\P(R_2)$ contains a \placeholder. 

\begin{example} The rectangles
\begin{align*}
R_1&=\{5\} \times [1,3] \times \{4\} \times [1,6] \times \{3\}\\
R_2&=[4,5] \times \{2\} \times \{4\} \times [2,5] \times \{3\}
\end{align*}
are consistent. We have $\P(R_1)=(5,\placeholder,\underline{4},\placeholder,\underline{3})$ and $\P(R_2)=(\placeholder,2,\underline{4},\placeholder,\underline{3})$. Here the set $\Delta(R_1,R_2)$ stores the intervals $\{5\}$, $[1,3]$, and $[1,6]$ from $R_1$ and $[4,5]$, $\{2\}$, and $[2,5]$ from $R_2$ that correspond to the coordinates 1, 2, and 4.
\end{example}

Let us make the following observation. Point (a) of the observation is straightforward. Point (b) boils down to simple arithmetics (see also \cite{DBLP:conf/spire/Grabowski17}).

\begin{observation}\label{obs:result}
  Let $R_1$ and $R_2$ be rectangles.
  \begin{enumerate}[(a)]
    \item If $R_1 \cap R_2 \ne \emptyset$, then $R_1$ and $R_2$ are consistent.
    \item If $R_1$ and $R_2$ are consistent, then knowing $\I(R_1)$, $\I(R_2)$, and $\Delta(R_1,R_2)$, one can compute the maximum $\ell_1$-norm of a point in $R_1 \cap R_2$, if it exists.
  \end{enumerate}
\end{observation}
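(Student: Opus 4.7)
The plan is to prove the two parts independently: part (a) is a direct unpacking of the definitions, while part (b) is a constant-size coordinate-wise computation that uses $\I(R_1)$ to recover the contribution from coordinates outside $\Delta(R_1,R_2)$.

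For (a), I would take any $x\in R_1\cap R_2$ and fix a coordinate $i$ at which neither $\P(R_1)$ nor $\P(R_2)$ contains a \placeholder. By definition of \placeholder, both $R_1$ and $R_2$ project to singletons at coordinate $i$; membership of $x$ in both rectangles forces those singletons to coincide with $x_i$, and this is precisely the consistency condition.

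For (b), I would classify each coordinate as lying in $\Delta(R_1,R_2)$ (at least one of $R_1$, $R_2$ is a non-singleton there) or outside $\Delta(R_1,R_2)$. At each outside coordinate, consistency forces both $R_1$ and $R_2$ to project to the same singleton $\alpha_i$, so the per-coordinate intersection equals $\{\alpha_i\}$. At each coordinate in $\Delta(R_1,R_2)$, the per-coordinate intersection is the intersection of the two recorded projections, which can be read off directly from $\Delta(R_1,R_2)$. Hence $R_1\cap R_2\ne\emptyset$ if and only if every per-coordinate intersection at $\Delta$ coordinates is non-empty, a condition visible in $\Delta(R_1,R_2)$ alone.

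Assuming $R_1\cap R_2$ is non-empty, its maximum $\ell_1$-norm equals $F+U$, where $F:=\sum_{i\notin\Delta}\alpha_i$ and $U$ is the sum over coordinates in $\Delta(R_1,R_2)$ of the upper endpoints of the per-coordinate intersections. The value $U$ is directly computable from $\Delta(R_1,R_2)$. To recover $F$, I would use the identity $\max(\I(R_1))=F+U_1$, where $U_1$ is the sum over $\Delta$ coordinates of the upper endpoints of $R_1$'s projections; since $U_1$ is again visible in $\Delta(R_1,R_2)$, I obtain $F=\max(\I(R_1))-U_1$. Because each rectangle has at most two non-singleton dimensions, $\Delta(R_1,R_2)$ contains only $\Oh(1)$ coordinates and the whole computation runs in $\Oh(1)$ arithmetic operations. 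The only step that is not immediate is this reconstruction of $F$; its correctness rests on observing that at non-$\Delta$ coordinates the upper endpoints of $R_1$ are exactly the singletons $\alpha_i$, so $\max(\I(R_1))$ decomposes as claimed ($\I(R_2)$ would serve identically by symmetry).
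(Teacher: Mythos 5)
Your proof is correct and matches the paper's approach: the paper offers no argument beyond declaring (a) straightforward and (b) ``simple arithmetics,'' and your coordinate-wise decomposition --- recovering the contribution of the shared singleton coordinates as $\max(\I(R_1))$ minus the sum of upper endpoints of $R_1$'s intervals recorded in $\Delta(R_1,R_2)$ --- is exactly the arithmetic being alluded to. The only implicit assumption worth making explicit is that all coordinates are nonnegative (the rectangles live in $\integ_+^d$), which is what lets you identify the maximum $\ell_1$-norm over a box with the sum of its upper endpoints.
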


The following lemma forms the basis of our algorithm. It follows from the properties of the sets $\Rect_V(l)$ for $V \in \{S,T\}$ (Lemma~\ref{lem:rect-l}\eqref{aaa}).

\begin{lemma}\label{lem:workhorse}
  Let $R_1 \in \Rect_S$ and assume that $\P(R_1)$ is known.
  \begin{enumerate}[(a)]
    \item The set of all $R_2 \in \Rect_T$ that are max-compatible with $R_1$ has size at most $2|T|$.
    \item The values $\I(R_2)$ and $\Delta(R_1,R_2)$ for all $R_2 \in \Rect_T$ that are max-compatible and consistent with $R_1$ can be computed in $\Oh(m)$ time and space.
  \end{enumerate}
\end{lemma}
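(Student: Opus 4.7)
Part~(a) is almost immediate. By definition, $R_2 \in \Rect_T$ is max-compatible with $R_1$ iff $l \in \I(R_2)$, where $l := \max \I(R_1) = |S_{i_1}| + \cdots + |S_{j_1}|$ for $R_1 = \rect_S(i_1,j_1)$; here $l$ is recoverable in $\Oh(1)$ from precomputed prefix sums of run lengths in $S$. The set in question is therefore exactly $\Rect_T(l)$, which by Lemma~\ref{lem:rect-l}\eqref{aaa} has at most $2|T|$ elements and can be generated in $\Oh(m)$ time.

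For part~(b) the plan is to walk through the rectangles of $\Rect_T(l)$ in the sliding order suggested by Observation~\ref{obs:rect-l}: the outer pointer $i_2$ runs from $1$ to $m$ and, for each value, the inner pointer $j_2$ runs from $j(i_2,l)$ up to $j(i_2+1,l)$ (capped at $m$). For each visited $R_2 = \rect_T(i_2,j_2)$ the interval $\I(R_2)=[\,|T_{i_2+1}|+\cdots+|T_{j_2-1}|,\,|T_{i_2}|+\cdots+|T_{j_2}|\,]$ is read off in $\Oh(1)$ from prefix sums, while the two \placeholder-letters $c,d$ of $\P(R_2)$ are just the letters of runs $T_{i_2}$ and $T_{j_2}$. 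The vector $D' := \P(T[i_2+1\dotdot j_2-1])$ is maintained as a sparse hash-table representation: advancing $j_2$ inserts run $T_{j_2}$ and advancing $i_2$ removes run $T_{i_2+1}$. Because both pointers only move forward and are bounded by $m$, there are $\Oh(m)$ updates in total, each costing $\Oh(1)$ amortized. Once, at the outset, I build an analogous sparse representation of $D := \P(S[i_1+1\dotdot j_1-1])$ in $\Oh(m)$ time and space.

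The delicate point is the consistency check, which naively costs $\Omega(\sigma)$ per rectangle and would blow up to $\Omega(m^2)$ when $\sigma=\Theta(m)$. The main trick is to maintain, alongside $D'$, a sparse representation of the difference $D-D'$ together with a counter $\kappa$ equal to the number of its nonzero coordinates. Every incremental update of $D'$ touches a single coordinate, so both $D-D'$ and $\kappa$ are updated in $\Oh(1)$. Consistency of $R_1$ and $R_2$ is exactly the condition that $D-D'$ vanishes outside the set $X:=\{a,b,c,d\}$ of (at most four) \placeholder-letters, where $a,b$ are the letters of $S_{i_1},S_{j_1}$; equivalently, $\kappa = |\{x\in X : (D-D')[x]\ne 0\}|$. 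The right-hand side is computed by four $\Oh(1)$ hash-table lookups. Whenever the test succeeds, $\Delta(R_1,R_2)$ is just the collection of intervals of $R_1$ and $R_2$ at the coordinates in $X$, each of which is available in $\Oh(1)$ from the run lengths of $S,T$ and from the four lookups already performed. Summing over the $\le 2m$ rectangles and the $\Oh(m)$ initialization gives the claimed $\Oh(m)$ time and space, so the main obstacle -- doing the consistency check in $\Oh(1)$ amortized despite an alphabet that can be as large as $m$ -- is dispatched by tracking the $\ell_0$-norm of the sparse difference vector rather than comparing entry by entry.
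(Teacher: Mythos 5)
Your proof is correct and follows essentially the same route as the paper's: part~(a) identifies the set with $\Rect_T(\max \I(R_1))$ and invokes Lemma~\ref{lem:rect-l}\eqref{aaa}, and part~(b) runs the same two-pointer sliding window with $\Oh(1)$-time Parikh-vector updates and a counter certifying consistency (the paper's counter $q$ of differing non-dimension coordinates is equivalent to your $\kappa$ restricted outside $X$). The only cosmetic difference is that you use a hash table for the sparse difference vector, whereas the paper simply keeps an array of size $\sigma=\Oh(m)$ (the alphabet having been renumbered to $\{1,\dots,m\}$), which avoids randomization.
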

\begin{proof}
  We start by computing $\I(R_1)$. Let $k=\max \I(R_1)$. Let us recall that $R_2$ is max-compatible with $R_1$ if and only if $k \in \I(R_2)$. The set of all such rectangles is precisely $\Rect_T(k)$. By Lemma~\ref{lem:rect-l}\eqref{aaa}, this set has size at most $2|T|$. This concludes point~(a).

  Lemma~\ref{lem:rect-l}\eqref{aaa} asserts that the set $\Rect_T(k)$ can be computed in $\Oh(m)$ time. To implement point (b), we will use a sliding window approach. We iterate over all indexes $i=1,\ldots,m$ and for each of them we consider $j=j(i,k),\ldots,j(i+1,k)$ as in Observation~\ref{obs:rect-l}. When any index $i$ or $j$ is incremented, $\P(R_2)$ for $R_2=\rect_T(i,j)$ can be updated in $\Oh(1)$ time, starting from an initial Parikh vector with all zeros. This allows us to store and update $\I(R_2)$ and $\Delta(R_1,R_2)$. We also store a counter $q$ of positions $a \in \{1,\ldots,\sigma\}$ such that $\P(R_1)$ and $\P(R_2)$ differ at position $a$ and $a$ is not a dimension of any of the two rectangles. This counter can be updated in $\Oh(1)$ time whenever $i$ and $j$ is incremented by inspecting the positions in $\P(R_2)$ that have changed. Then $R_2$ is consistent with $R_1$ if and only if $q=0$. 
  
  The computations made upon an incrementation of $i$ or $j$ take $\Oh(1)$ time; the time complexity follows. The only additional space used in the algorithm is the Parikh vector which takes $\Oh(\sigma)=\Oh(m)$ space.
\end{proof}

We arrive at the main result of this section.

\begin{theorem}
  The RLE-LCAF problem can be solved in $\Oh(m^3)$ time and $\Oh(m)$ space.
\end{theorem}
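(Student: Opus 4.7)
The plan is to iterate over all rectangles $R_1\in \Rect_S$, and, by symmetry, over all $R_1\in \Rect_T$, and for each of them to process in $\Oh(m)$ time all rectangles in the other family that are max-compatible with it. By Observation~\ref{obs:compat}(b), in the optimal intersecting pair $(R_1,R_2)\in \Rect_S\times \Rect_T$ at least one of $R_1,R_2$ is max-compatible with the other, so this suffices to discover the best pair.

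First I would fix the outer iteration. For each $i=1,\dotsc,m$ I traverse $j=i,i+1,\dotsc,m$, maintaining the Parikh vector $\P(R_1)$ of $R_1=\rect_S(i,j)$ incrementally: moving from $(i,j)$ to $(i,j+1)$ costs $\Oh(1)$ updates (add the contribution of $S_{j+1}$), and restarting a new row $i$ only requires resetting the vector. This gives all $\Oh(m^2)$ rectangles of $\Rect_S$ with their Parikh vectors available in $\Oh(m^2)$ total time and $\Oh(m)$ space.

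Next, for each such $R_1$ I invoke Lemma~\ref{lem:workhorse}: in $\Oh(m)$ time and space I obtain, for every rectangle $R_2\in \Rect_T$ that is max-compatible and consistent with $R_1$, the values $\I(R_2)$ and $\Delta(R_1,R_2)$; Lemma~\ref{lem:workhorse}(a) guarantees that there are only $\Oh(m)$ such rectangles. By Observation~\ref{obs:result}(a), any $R_2$ intersecting $R_1$ is consistent with $R_1$, so nothing relevant is lost by the consistency filter. For every produced $R_2$, I also already know $\I(R_1)$; then Observation~\ref{obs:result}(b) yields in $\Oh(1)$ time (since $|\Delta(R_1,R_2)|=\Oh(1)$) the maximum $\ell_1$-norm of a point in $R_1\cap R_2$, if any. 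I keep the best value seen so far. I then repeat the entire procedure with the roles of $S$ and $T$ swapped, so that the second alternative of Observation~\ref{obs:compat}(b) is also covered.

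For correctness, Lemma~\ref{lem:red} reduces RLE-LCAF to finding the maximum $\ell_1$ intersection point between $\Rect_S$ and $\Rect_T$; by Observations~\ref{obs:compat} and~\ref{obs:result}(a), every intersecting pair $(R_1,R_2)$ is compatible and consistent, and by Observation~\ref{obs:compat}(b) it is considered in at least one of the two sweeps. For complexity, the outer iteration has $\Oh(m^2)$ steps, each contributing $\Oh(m)$ work via Lemma~\ref{lem:workhorse}, giving $\Oh(m^3)$ time overall. The space used consists of the incrementally maintained $\P(R_1)$, the $\Oh(m)$ scratch space of Lemma~\ref{lem:workhorse}, and the current best answer, i.e., $\Oh(m)$ in total. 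No step seems to be a real obstacle; the only point to get right is the bookkeeping that keeps the Parikh vector of $R_1$, the sliding-window state inside Lemma~\ref{lem:workhorse}, and the consistency counter $q$ all updated in amortized $\Oh(1)$ time per increment, which is exactly what Lemma~\ref{lem:workhorse}(b) was designed to provide.
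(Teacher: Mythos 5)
Your proposal is correct and follows essentially the same route as the paper's proof: the same outer sweep over $R_1=\rect_S(i,j)$ with an incrementally maintained Parikh vector, the same invocation of Lemma~\ref{lem:workhorse} and Observation~\ref{obs:result}(b) per rectangle, the same symmetric repetition with $S$ and $T$ swapped, and the same correctness argument via Observation~\ref{obs:compat}(b). No substantive differences.
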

\begin{proof}
  We apply the reduction to Problem~\ref{prob:rect} of Lemma~\ref{lem:red}.
  
  For each $i=1,\ldots,|S|$ we generate $\P(\rect_S(i,j))$ for consecutive $j=i,\ldots,|S|$. The next Parikh vector overwrites the previous one. Thus these Parikh vectors can be computed one by one in $\Oh(m)$ total time and space. Whenever a new Parikh vector is generated, we use Lemma~\ref{lem:workhorse} to compute the values $\I(R_2)$ and $\Delta(R_1,R_2)$ for all $R_2 \in \Rect_T$ that are max-compatible and consistent with $R_1$. Finally, we use Observation~\ref{obs:result}(b) to compute the maximum $\ell_1$ norm of a point in $R_1 \cap R_2$, for any of the rectangles $R_2$.
  
  Next we repeat the whole procedure with $S$ and $T$ interchanged.
  
  Correctness of the algorithm follows from Observation~\ref{obs:compat}: if $R_1\in \Rect_S$ and $R_2 \in \Rect_T$ intersect, then they are compatible, which means that $R_2$ is max-compatible with $R_1$ or $R_1$ is max-compatible with $R_2$.
  
  There are $\Oh(m^2)$ rectangles $R_1 \in \Rect_S$ and for each of them the computations take $\Oh(m)$ time. Hence, the algorithm works in $\Oh(m^3)$ time. The space complexity is linear.
\end{proof}

\section{Conclusions and Open Problems}
\noindent
We have presented efficient algorithms for the LCAF and RLE-LCAF problems:
\begin{itemize}
  \item $\Oh(n^2 / \log^{1+1/\sigma} n)$-time and $\Oh(n)$-space algorithm for LCAF with $\sigma=\Oh(1)$;
  \item $\Oh(m^2\sigma^2 \log^3 m)$ time and $\Oh(m (\sigma^2+\log^2 m))$ space algorithm for RLE-LCAF with arbitrary $\sigma$;
  \item $\Oh(m^3)$-time and $\Oh(m)$-space algorithm for RLE-LCAF with arbitrary $\sigma$.
\end{itemize}
For LCAF over a constant-sized alphabet, we have obtained an over-logarithmic speedup comparing to a naive $\Oh(n^2)$-time solution. Let us recall that over the binary alphabet, LCAF can be solved much more efficiently, in $\Oh(n^{1.859})$ time~\cite{DBLP:journals/ijfcs/AlatabbiILR16,DBLP:conf/stoc/ChanL15}. An open question is to design an $\Oh(n^{2-\varepsilon})$-time algorithm ($\varepsilon>0$) for LCAF for alphabet of any constant size, e.g., for $\sigma=3$.

For the RLE-LCAF problem, the most interesting question is for the existence of an $\Oh(m^{3-\varepsilon})$-time and $\Oh(m)$ space algorithm for an arbitrary $\sigma$.

\paragraph{\bf Acknowledgements}
Jakub Radoszewski was supported by the “Algorithms for text processing with errors and uncertainties” project carried out within the HOMING program of the Foundation for Polish Science co-financed by the European Union under the European Regional Development Fund.

\bibliographystyle{abbrv}
\bibliography{abelian_lcf}

\begin{thebibliography}{10}

\bibitem{DBLP:journals/siamcomp/AgarwalAKMTY12}
P.~K. Agarwal, L.~Arge, H.~Kaplan, E.~Molad, R.~E. Tarjan, and K.~Yi.
\newblock An optimal dynamic data structure for stabbing-semigroup queries.
\newblock {\em {SIAM} Journal on Computing}, 41(1):104--127, 2012.

\bibitem{DBLP:journals/ijfcs/AlatabbiILR16}
A.~Alatabbi, C.~S. Iliopoulos, A.~Langiu, and M.~S. Rahman.
\newblock Algorithms for longest common abelian factors.
\newblock {\em International Journal of Foundations of Computer Science},
  27(5):529--544, 2016.

\bibitem{DBLP:journals/tcs/AmirAHLLR16}
A.~Amir, A.~Apostolico, T.~Hirst, G.~M. Landau, N.~Lewenstein, and
  L.~Rozenberg.
\newblock Algorithms for jumbled indexing, jumbled border and jumbled square on
  run-length encoded strings.
\newblock {\em Theoretical Computer Science}, 656:146--159, 2016.

\bibitem{DBLP:journals/ipl/BadkobehFKL13}
G.~Badkobeh, G.~Fici, S.~Kroon, and Z.~Lipt{\'{a}}k.
\newblock Binary jumbled string matching for highly run-length compressible
  texts.
\newblock {\em Information Processing Letters}, 113(17):604--608, 2013.

\bibitem{DBLP:conf/spire/BadkobehGGNPS16}
G.~Badkobeh, T.~Gagie, S.~Grabowski, Y.~Nakashima, S.~J. Puglisi, and
  S.~Sugimoto.
\newblock Longest common abelian factors and large alphabets.
\newblock In S.~Inenaga, K.~Sadakane, and T.~Sakai, editors, {\em String
  Processing and Information Retrieval - 23rd International Symposium, {SPIRE}
  2016, Proceedings}, volume 9954 of {\em Lecture Notes in Computer Science},
  pages 254--259, 2016.

\bibitem{DBLP:journals/ipl/Bentley79}
J.~L. Bentley.
\newblock Decomposable searching problems.
\newblock {\em Information Processing Letters}, 8(5):244--251, 1979.

\bibitem{DBLP:journals/algorithmica/BremnerCDEHILPT14}
D.~Bremner, T.~M. Chan, E.~D. Demaine, J.~Erickson, F.~Hurtado, J.~Iacono,
  S.~Langerman, M.~Patrascu, and P.~Taslakian.
\newblock Necklaces, convolutions, and x+y.
\newblock {\em Algorithmica}, 69(2):294--314, 2014.

\bibitem{DBLP:conf/fun/BurcsiCFL10}
P.~Burcsi, F.~Cicalese, G.~Fici, and Z.~Lipt{\'a}k.
\newblock On table arrangements, scrabble freaks, and jumbled pattern matching.
\newblock In P.~Boldi and L.~Gargano, editors, {\em 5th International
  Conference on Fun with Algorithms, FUN 2010, Proceedings}, volume 6099 of
  {\em Lecture Notes in Computer Science}, pages 89--101. Springer, Heidelberg,
  2010.

\bibitem{DBLP:journals/ijfcs/BurcsiCFL12}
P.~Burcsi, F.~Cicalese, G.~Fici, and Z.~Lipt{\'a}k.
\newblock Algorithms for jumbled pattern matching in strings.
\newblock {\em International Journal of Foundations of Computer Science},
  23(2):357--374, 2012.

\bibitem{DBLP:conf/stoc/ChanL15}
T.~M. Chan and M.~Lewenstein.
\newblock Clustered integer 3{SUM} via additive combinatorics.
\newblock In R.~A. Servedio and R.~Rubinfeld, editors, {\em 47th Annual {ACM}
  on Symposium on Theory of Computing, {STOC} 2015, Proceedings}, pages 31--40.
  {ACM}, New York, 2015.

\bibitem{PSC2009-10}
F.~Cicalese, G.~Fici, and Z.~Lipt{\'{a}}k.
\newblock Searching for jumbled patterns in strings.
\newblock In J.~Holub and J.~{\v{Z}}{\v{d}}{\'{a}}rek, editors, {\em Prague
  Stringology Conference 2009, PSC 2009, Proceedings}, pages 105--117, Czech
  Technical University, Prague, 2009.

\bibitem{AlgorithmsOnStrings}
M.~Crochemore, C.~Hancart, and T.~Lecroq.
\newblock {\em Algorithms on Strings}.
\newblock Cambridge University Press, New York, NY, USA, 2007.

\bibitem{DBLP:conf/cpm/CunhaDGWKS17}
L.~F.~I. Cunha, S.~Dantas, T.~Gagie, R.~Wittler, L.~A.~B. Kowada, and J.~Stoye.
\newblock Fast and simple jumbled indexing for binary run-length encoded
  strings.
\newblock In J.~K{\"{a}}rkk{\"{a}}inen, J.~Radoszewski, and W.~Rytter, editors,
  {\em 28th Annual Symposium on Combinatorial Pattern Matching, {CPM} 2017},
  volume~78 of {\em LIPIcs}, pages 19:1--19:9. Schloss Dagstuhl -
  Leibniz-Zentrum fuer Informatik, 2017.

\bibitem{DBLP:journals/ipl/GiaquintaG13}
E.~Giaquinta and S.~Grabowski.
\newblock New algorithms for binary jumbled pattern matching.
\newblock {\em Information Processing Letters}, 113(14-16):538--542, 2013.

\bibitem{DBLP:conf/spire/Grabowski17}
S.~Grabowski.
\newblock Regular abelian periods and longest common abelian factors on
  run-length encoded strings.
\newblock In G.~Fici, M.~Sciortino, and R.~Venturini, editors, {\em String
  Processing and Information Retrieval - 24th International Symposium, {SPIRE}
  2017, Proceedings}, volume 10508 of {\em Lecture Notes in Computer Science},
  pages 208--213. Springer, 2017.

\bibitem{DBLP:journals/jal/Han04}
Y.~Han.
\newblock Deterministic sorting in {$O(n\log\log n)$} time and linear space.
\newblock {\em Journal of Algorithms}, 50(1):96--105, 2004.

\bibitem{DBLP:conf/focs/HanT02}
Y.~Han and M.~Thorup.
\newblock Integer sorting in {$O(n\sqrt{\log \log n})$} expected time and
  linear space.
\newblock In {\em 43rd Symposium on Foundations of Computer Scienc, (FOCS)
  2002}, pages 135--144. {IEEE} Computer Society, 2002.

\bibitem{DBLP:journals/corr/HermelinLRW14}
D.~Hermelin, G.~M. Landau, Y.~Rabinovich, and O.~Weimann.
\newblock Binary jumbled pattern matching via all-pairs shortest paths.
\newblock {\em CoRR}, abs/1401.2065, 2014.

\bibitem{DBLP:journals/algorithmica/KociumakaRR17}
T.~Kociumaka, J.~Radoszewski, and W.~Rytter.
\newblock Efficient indexes for jumbled pattern matching with constant-sized
  alphabet.
\newblock {\em Algorithmica}, 77(4):1194--1215, 2017.

\bibitem{DBLP:conf/macis/KociumakaRW15}
T.~Kociumaka, J.~Radoszewski, and B.~Wi\'sniewski.
\newblock Subquadratic-time algorithms for abelian stringology problems.
\newblock In I.~S. Kotsireas, S.~M. Rump, and C.~K. Yap, editors, {\em
  Mathematical Aspects of Computer and Information Sciences - 6th International
  Conference, {MACIS} 2015, Revised Selected Papers}, volume 9582 of {\em
  Lecture Notes in Computer Science}, pages 320--334. Springer, 2015.

\bibitem{AIMS}
T.~Kociumaka, J.~Radoszewski, and B.~Wi\'sniewski.
\newblock Subquadratic-time algorithms for abelian stringology problems.
\newblock {\em AIMS Medical Science}, 4(3):332--351, 2017.

\bibitem{DBLP:journals/ipl/MoosaR10}
T.~M. Moosa and M.~S. Rahman.
\newblock Indexing permutations for binary strings.
\newblock {\em Information Processing Letters}, 110(18-19):795--798, 2010.

\bibitem{DBLP:journals/jda/MoosaR12}
T.~M. Moosa and M.~S. Rahman.
\newblock Sub-quadratic time and linear space data structures for permutation
  matching in binary strings.
\newblock {\em Journal of Discrete Algorithms}, 10:5--9, 2012.

\bibitem{DBLP:journals/corr/SugimotoNIBT17}
S.~Sugimoto, N.~Noda, S.~Inenaga, H.~Bannai, and M.~Takeda.
\newblock Computing abelian regularities on {RLE} strings.
\newblock {\em CoRR}, abs/1701.02836 \textit{(accepted to IWOCA 2017)}, 2017.

\bibitem{DBLP:journals/jacm/Thorup07}
M.~Thorup.
\newblock Equivalence between priority queues and sorting.
\newblock {\em Journal of the {ACM}}, 54(6):28, 2007.

\end{thebibliography}
\end{document}